\documentclass[peerreview]{IEEEtran}
\usepackage{amsmath,amssymb,amsfonts,amsthm}
\usepackage{xfrac}
\usepackage{algorithm}
\usepackage{algpseudocode}
\usepackage{comment}
\usepackage{graphicx}
\usepackage{makecell}
\usepackage{color}
\usepackage[nolist,printonlyused]{acronym}
\usepackage{multirow}
\usepackage[caption=false, font=footnotesize]{subfig}

\usepackage{tikz}
\usepackage{pgf}
\usepackage{pgfplots}
\pgfplotsset{compat=1.18}

\usepackage[sorting=none,style=ieee, minbibnames=1, maxbibnames=4, url=false, doi=false]{biblatex}
\DefineBibliographyStrings{english}{%
  bibliography = {References},
}

\AtBeginBibliography{\footnotesize} 

\newtheorem{theo}{Theorem}
\newtheorem{exam}{Example}
\newtheorem{proposition}{Proposition}
\newtheorem{asum}{Assumption}
	
\newtheorem{lem}{Lemma}	
\newtheorem{corollary}{Corollary}	

\DeclareMathOperator*{\argmin}{arg\,min}
\DeclareMathOperator*{\argmax}{arg\,max}
\DeclareMathOperator*{\subject}{s. t.} 

\usepackage{bbm}

\definecolor{airforceblue}{rgb}{0.36, 0.54, 0.66}

\definecolor{eblue}{HTML}{0082F0}
\definecolor{egreen}{HTML}{0FC373}
\definecolor{epurple}{HTML}{AF78D2}
\definecolor{eyellow}{HTML}{FAD22D}
\definecolor{eorange}{HTML}{FF8C0A}
\definecolor{ered}{HTML}{FF3232}

\definecolor{eblue3}{HTML}{B2D8F9}
\definecolor{egreen3}{HTML}{BDEDD6}
\definecolor{epurple3}{HTML}{E8D6F2}
\definecolor{eyellow3}{HTML}{FEF1C6}
\definecolor{eorange3}{HTML}{FFDBB5}
\definecolor{ered3}{HTML}{FEC3C2}

\begin{document}

\title{Joint Clustering and Prediction of the Quality of Service in Vehicular Cellular Networks \\ 
\thanks{
This work has been submitted to the IEEE for possible publication. Copyright may be transferred without notice, after which this version may no longer be accessible.
}
}

\author{
Oscar Stenhammar$^{\star \dag}$, 
Gábor Fodor$^{\star \dag} $, 
Carlo Fischione$^\star$
\\
\small $^\star$KTH Royal Institute of Technology, Sweden. E-mail: \texttt{ostenh|sraz|gaborf|carlofi@kth.se}
\\
\small $^\dag$Ericsson Research, Sweden. E-mail: \texttt{oscar.stenhammar|gabor.fodor@ericsson.com}\\
}



\begin{acronym}
  \acro{3GPP}{3$^\text{rd}$~Generation Partnership Project}
  \acro{4G}{4$^\text{th}$~Generation}
  \acro{5G}{5$^\text{th}$~generation}
  \acro{6G}{6$^\text{th}$~generation}
  \acro{AI}{artificial intelligence}
  \acro{AC}{admission control}
  \acro{ANN}{artificial neural network}
  \acro{AR}{autoregressive}
  \acro{ARMA}{autoregressive moving average}
  \acro{ARIMA}{auto regressive integrated moving average}
 \acro{AWGN}{additive white Gaussian noise}
  \acro{BPNN}{back propagation neural network}
  \acro{BER}{bit error rate}
  \acro{BCD}{block coordinate descent}
  \acro{BPSK}{binary phase-shift keying}
  \acro{BS}{base station}
  \acro{CDF}{cumulative distribution function}
  \acro{CNN}{convolutional neural network}
  \acro{CSI}{channel state information}
  \acro{CSIR}{channel state information at the receiver}
  \acro{CSIT}{channel state information at the transmitter}
  \acro{CUE}{cellular user equipment}
  \acro{DL}{downlink}
  \acro{DNN}{deep neural network}
  \acro{D-MIMO}{distributed multiple input multiple output}
  \acro{eNB}{eNodeB}
  \acro{FDD}{frequency division duplexing}
  \acro{FL}{federated learning}
  \acro{GRU}{gated recurrent unit}
  \acro{IID}{independent and identically distributed}
  \acro{IoT}{Internet of Things}
  \acro{KDE}{kernel density estimation}
  \acro{KF}{Kalman filter}
  \acro{KPI}{key performance indicator}
  \acro{LOS}{line-of-sight}
  \acro{LS}{least squares}
  \acro{LSTM}{long short-term memory}
  \acro{LTE}{long term evolution}
  \acro{MAC}{medium access control}
  \acro{MAE}{mean absolute error}
  \acro{MAPE}{mean absolute percentage error}
  \acro{mMIMO}{massive multiple input multiple output}
  \acro{MIMO}{multiple input multiple output}
  \acro{ML}{machine learning}
  \acro{MLP}{multilayer perceptron}
  \acro{MMSE}{minimum mean squared error}
  \acro{MSE}{mean squared error}
  \acro{MU-MIMO}{multiuser multiple input multiple output}
  \acro{NLOS}{non-line-of-sight}
  \acro{NMSE}{normalized mean square error}
  \acro{NN}{neural network}
  \acro{NLL}{negative log-likelihood}
  \acro{OFDMA}{orthogonal frequency division multiple access}
  \acro{OFDM}{orthogonal frequency division multiplexing}
  \acro{O-RAN}{open RAN}
  \acro{pQoS}{predictive QoS}
  \acro{PDF}{probability density function}
  \acro{PSD}{positive semidefinite}
  \acro{PRB}{physical resource block}
  \acro{QoS}{quality of service}
  \acro{QoE}{quality of experience}
  \acro{RAN}{radio access network}
  \acro{ReLU}{rectified linear unit}
  \acro{RL}{reinforcement learning}
  \acro{RSCP}{Received Signal Code Power}
  \acro{RSRP}{reference signal received power}
  \acro{RSRQ}{Reference Signal Received Quality}
  \acro{RSSI}{Received Signal Strength Indicator}
  \acro{SLA}{service level agreement}
  \acro{SINR}{signal-to-interference-plus-noise ratio} 
  \acro{SNR}{signal-to-noise ratio}
  \acro{SVD}{singular value decomposition}
  \acro{SVT}{singular value thresholding}
  \acro{TDD}{time division duplexing}
  \acro{TDL}{tapped delay line}
  \acro{UE}{user equipment}
  \acro{UL}{uplink}
  \acro{V2V}{vehicle-to-vehicle}
  \acro{V2X}{vehicle-to-everything}
  \acro{ZF}{zero-forcing}
  \acro{ZMCSCG}{zero mean circularly symmetric complex Gaussian}
\end{acronym}

\maketitle
\pagenumbering{arabic}

\begin{abstract}
    Machine learning models are increasingly deployed in wireless networks with stringent performance requirements. However, dynamic propagation environments and fluctuating traffic densities introduce concept drift, which complicates the ability to maintain accurate predictive machine learning models. We propose a distributed optimization framework that jointly clusters cells and trains cluster-level predictive models, enabling nodes to cooperatively predict quality of service (QoS) distributions under communication constraints. The proposed method models QoS as a multivariate Gaussian/lognormal distribution and uses a novel clustering mechanism that groups cells with similar network conditions, allowing each cell to select the most appropriate predictor without retraining new models for each cell. By leveraging block coordinate descent, our solution efficiently clusters the cells and updates the predictive models to mitigate concept drift, while maintaining a compact model set to minimize computation overhead. Evaluation using data from realistic simulations with the Sionna ray-tracer and the ns-3 simulator shows that the method converges and yields cluster constellations that adapt to changes in the network that cause concept drift. The experimental evaluation focuses on providing a prediction of the distribution latency, jitter, and RSRP over a one-hour prediction horizon.
    The proposed method significantly outperforms the traditional single global predictive model approach and reduces the mean absolute error by 9–27\% compared to local cell-level predictors. This demonstrates that the proposed method effectively captures local variability using far fewer models through scalable distributed clustering.
\end{abstract}

\section{Introduction}
\IEEEPARstart{T}{he} transition toward the \ac{6G} in wireless communication systems is partly driven by emerging applications that demand \ac{QoS} improvements compared to 5G~\cite{10054381, 11126933, 11112590}. 
For example, automated connected users with high mobility impose requirements on low latency, sufficient throughput, and extremely high reliability. In vehicular applications such as teleoperated driving, collision avoidance, or platooning, excessive communication interruptions may directly compromise safety. Network operators therefore offer \acp{SLA} that guarantee predefined \ac{QoS} levels. In this context, predicting future QoS enables proactive and risk-aware scheduling in the \ac{RAN}. By anticipating SLA violations due to congestion or channel variations, the network can allocate resources in advance rather than reacting after a breach has occurred~\cite{9712458, 9779322}. This motivates new architectural and algorithmic solutions across the \ac{RAN}~\cite{10024837}.

Wireless applications with the need for proactive service management, and the rapid advancements of \ac{ML}, have generated great research interest in the development of \ac{pQoS} for next-generation networks~\cite{10155554, 10901462, 11087616}. The \ac{pQoS} algorithms utilize \ac{ML} to predict the \ac{QoS}, which forms the foundation of proactive service management. With the correct setup, \ac{ML} has the ability to make accurate and reliable \ac{QoS} predictions. 
Non-\ac{IID} data, which frequently arise in wireless communication settings, can substantially degrade the prediction accuracy for \ac{ML} algorithms~\cite{palaios2023story}. In vehicular communication networks, dynamic environmental factors such as propagation environments, urban infrastructure, and traffic density lead to concept drift in the received \ac{QoS}~\cite{palaios2023story, 6362634}. 
The term concept drift has various definitions in the literature. In this paper, we adopt the definition of concept drift from~\cite {8496795}, which states that "\emph{Concept drift is a phenomenon in which the statistical properties of a target domain change over time in an arbitrary way}." 

In wireless networks, there can be various types of concept drift, including sudden, gradual, incremental, and recurring drift.
These drifts cause models trained on past data to become misaligned with current conditions. In the context of data-driven \ac{pQoS} ML models, concept drift can significantly deteriorate predictive accuracy, since \ac{ML} relies on training from actual data~\cite{8496795, 11104484}. %
For example, consider a highway road segment where the \ac{pQoS} model was trained under moderate traffic load and stable channel conditions. Over time, traffic density increases due to seasonal commuting patterns, leading to higher interference levels, more frequent handovers, and stronger fluctuations in \ac{SINR}. These spatiotemporal shifts alter the underlying mapping from features (e.g., \ac{RSRP}, channel quality, user speed) to QoS outcomes (e.g., reliability, latency, jitter), meaning that the previously learned model no longer reflects the current network dynamics. This exemplifies concept drift in a \ac{pQoS} context and highlights the need for adaptive learning mechanisms in pQoS systems~\cite{8496795}.


An important aspect of \ac{pQoS} mechanisms is their prediction horizon, the time interval into the future for which the \ac{QoS} is predicted~\cite{10032072}. Short-term predictions, ranging from milliseconds to a few seconds, enable the network to react to rapid channel fluctuations by proactive handover decisions and adaptive modulation and coding. Medium-term prediction horizons, on the order of tens of seconds to minutes, enable operators to anticipate traffic volume peaks and optimize resources across cells. Long-term predictions, ranging from hours to days, can support strategic network management functions and infrastructure planning. By selecting an appropriate prediction horizon, or by combining multiple horizons, \ac{pQoS} can serve both real-time control and broader operational planning. This strengthens \ac{pQoS} as a key enabler of proactive management in future 6G networks. Needless to say, the prediction task becomes more challenging as the prediction horizon increases. 


\subsection{Literature Review}

The majority of prior work on \ac{pQoS} for wireless networks relies on a single global model~\cite{palaios2023story, boban2021predictive}, which is defined as a predictive model trained centrally using aggregated data collected from all devices across a wide geographical region. 
Such a model is typically trained offline using historical data aggregated at a central server to update model weights periodically, rather than in real-time. The model is then deployed at the \acp{BS} to predict future QoS outcomes from locally observed network features. This centralized training paradigm assumes that a single global model can adequately capture the heterogeneous wireless conditions experienced throughout the network and performs well in relatively stable environments where stationarity assumptions approximately hold. However, a large single-model paradigm is inherently vulnerable to concept drift because of spatial and temporal dynamic conditions in smaller regions~\cite{11104484}.

To address these limitations, more adaptive \ac{pQoS} algorithms are necessary to mitigate concept drift and improve robustness. Strategies range from continual learning schemes that update models with fresh data, to domain adaptation approaches that detect and react to distributional shifts~\cite{10032072, 10.1007/s11042-021-11219-x}. These methods aim both to preserve predictive accuracy and to limit the operational cost of retraining in large-scale networks. 

Among the distributed solutions, a clustered version of \ac{FL} has emerged as a promising direction~\cite{Clustered2024Kim, zhang2025clusteredfederatedlearningembedding}. By grouping clients with similar data distributions and performing federated updates within clusters, these approaches reduce negative transfer across heterogeneous devices or locations while retaining the privacy and scalability benefits of \ac{FL}. Clustered federated schemes can dynamically reassign clients as their local distributions evolve, which helps to contain concept drift across user populations~\cite{11045092, s25237370}. However, many of the clustered approaches focus QoS predictions on the user side instead of predicting the QoS on the network side~\cite{taik2022clustered, ghosh2022efficient}, which is where the network management algorithms run. 

Notably, much of the prior literature focuses on short-term QoS forecasting, where the QoS is treated as a single random variable~\cite{palaios2023story}. Considerably less work has addressed longer prediction horizons or the prediction of full QoS distributions. Mid-term prediction efforts exist in previous work. For instance, some studies forecast QoS over horizons on the order of one minute~\cite{8166779}. As the prediction horizon increases, temporal autocorrelation weakens, reducing the usefulness of conventional time-series forecasting models. This motivates the need for probabilistic approaches that explicitly model the Gaussian distribution of future QoS rather than a single expected value~\cite{5450287, mostafavi2025probabilisticdelayforecasting5g}. 

\begin{figure*}[ht]
    \centering
    \includegraphics[width=0.94\linewidth]{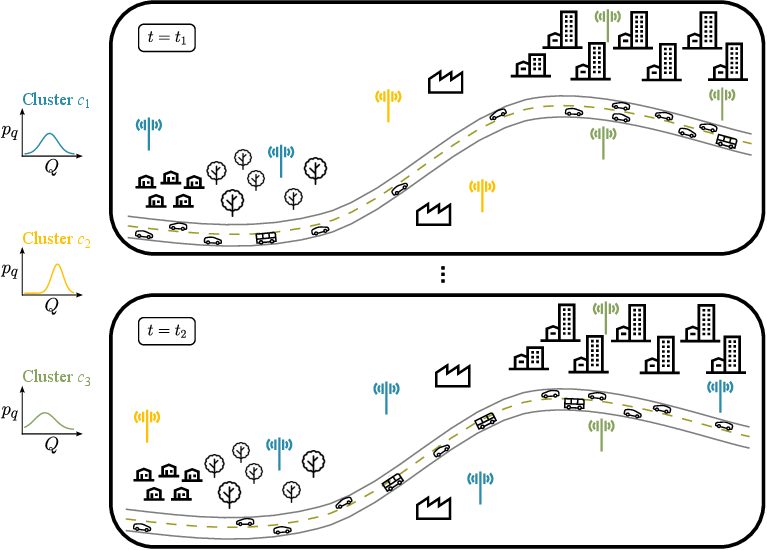}
    \caption{An illustration of the system model. Due to mobile users moving, the scenario evolves between times $t_1$ and $t_2$, resulting in changes in the load and interference in the network. This changes the distribution $p_q$ for a certain \ac{KPI} $Q$ in the different cells. Rather than performing expensive re-training of the existing \ac{pQoS} models, we re-cluster the cells to ensure that each cell is associated with a cluster whose \ac{pQoS} model is well aligned with the scenario of the cell and QoS distribution $p_q$.}
    \label{fig:system}
\end{figure*}

\subsection{Contributions}
This paper takes the next step in \ac{pQoS} by forecasting QoS Gaussian distributions over long-term prediction horizons and by introducing an adaptive, joint clustering and training framework tailored for cellular networks serving \ac{SLA} users. Our main contributions are:

\begin{itemize}
    \item We propose a pQoS framework that groups cells with similar network conditions and learns a dedicated, cluster-specific probabilistic model for each group. The framework predicts the mean vector and covariance matrix of a $d$-dimensional Gaussian distribution over a prediction horizon in the magnitude of hours. 

    \item We formulate this problem as a joint clustering and prediction optimization over cluster assignments, neural network parameters, and the number of clusters. The problem is relaxed via a nuclear-norm regularization. The resulting problem is optimized using \ac{BCD}, alternating between model parameter training with \ac{FL} and cluster assignment updates based on a matrix of distance metrics, \ac{SVT}, and simplex projection.

    \item Under standard assumptions, such as $L$-smoothness and bounded loss, we prove convergence of the proposed BCD procedure to a stationary point. We demonstrate through simulations in heterogeneous deployments that our method converges and outperforms single-model and non-clustered baselines in both distributional accuracy and robustness.
\end{itemize}

Together, these contributions provide a reliable approach for pQoS over long-term prediction horizons for cellular networks that must support \acp{SLA}.

The remainder of the paper is organized as follows: The system model is defined in Section~\ref{sec:system}. The original joint optimization problem to be solved is formulated in Section~\ref{sec:probform}. Section~\ref{sec:probrelax} formulates a relaxed problem that is possible to solve, Section~\ref{sec:bcd} provides the algorithm of the proposed solution, and Section~\ref{sec:analysis} develops the theoretical analysis to prove convergence for the proposed algorithm. Numerical results to assess the proposed method's performance are provided in Section~\ref{sec:eval}. The results are discussed in~\ref{sec:discusion}, and finally, Section~\ref{sec:conclusion} summarizes and concludes the paper.

\section{System Model}
\label{sec:system}
We consider a cellular network deployed over a geographical region comprising a set of $N$ \acp{BS}, each serving individual geographical cells. The system considers user density that varies across the area, for instance, between urban and rural regions, leading to spatially heterogeneous interference levels and network load. The region contains a road grid with diverse propagation environments, along which a set of vehicles travel. Each vehicle runs applications that rely on the cellular network, making reliable QoS predictions essential for ensuring the SLAs. To support these applications, the network operator seeks to manage resources proactively to ensure that SLAs can be met by relying on accurate QoS forecasts over a long-term prediction horizon. 

The goal of our approach is to efficiently capture spatial heterogeneity in the network by partitioning $N$ cells from the network into a smaller number $C\le N$ of clusters. The clusters are formed based on the similarity of the data distributions across cells. Each cluster will be assigned a predictive model. By optimizing the number of clusters, we can balance model complexity and predictive accuracy: too few clusters may fail to capture local variations in QoS, while too many clusters increase training and maintenance overhead. For each cluster, we train a predictive model that outputs the parameters of a multivariate distribution of the QoS metrics. We target the Gaussian distribution instead of an exact value since the QoS over a long-term prediction horizon is influenced by highly variable and stochastic factors, making point predictions difficult to obtain with high confidence. We target prediction horizons in the magnitude of hours. By producing distributional forecasts over such time horizons, the system supports proactive network management and more reliable decision-making. 

The vector \( \mathbf{\hat{y}}_i \) representing the QoS distribution parameters in cell \( i\) can be predicted by
\begin{equation}
\mathbf{\hat{y}}_i (t+h) = \boldsymbol{\omega} (\mathbf{x}_i; \boldsymbol{\theta}_c),
\label{eq:y_hat}
\end{equation}
using the predictive function $\boldsymbol{\omega}$ with weights $\boldsymbol{\theta}_c$, specific for each cluster $c$.
The cell-specific input data \( \mathbf{x}_i \) include network\mbox{-}, mobility\mbox{-}, packet\mbox{-}, and radio-level measurements collected during the most recent sampling period. It also contains the expectation vector and covariance matrix at time $t$ in cell $i$. 
Note that $\mathbf{y}_i (t+h)$ denotes the true distribution parameters at time $t+h$, while $\mathbf{\hat{y}}_i (t+h)$ reflects the predicted parameters.

Observed QoS metrics typically arise from the aggregation of many independent random effects~\cite{ANTONIOU200272}, which, by the central limit theorem, lead to either approximately a Gaussian distribution or a lognormal distribution~\cite{crow1987lognormal}. If a QoS metric follows a lognormal distribution, which is often the case for latency as an example~\cite{ANTONIOU200272}, applying a logarithmic transformation maps the metric to a Gaussian distribution, allowing both cases to be handled within a unified Gaussian modeling framework. This motivates our choice to consider the QoS to follow a Gaussian distribution.

The predictive framework updates associated models with each cell locally at the corresponding \ac{BS} using its own data. The cluster-level models are then updated through an \ac{FL} procedure~\cite{mcmahan2023communicationefficient}, in which \acp{BS} periodically share model gradients with a central server to produce an aggregated update without exchanging raw data. The update period $\tau$ is a design parameter that can be chosen based on system requirements such as network latency and available communication resources. In practice, it is typically defined by the network operator or system designer to balance convergence speed and communication overhead. Shorter periods allow faster adaptation to changing conditions, while longer periods reduce signaling and computational load at the \acp{BS}.

In the literature of QoS and network performance prediction, neural network architectures with three or more hidden layers are frequently used to achieve sufficient model capacity~\cite{palaios2023story}. While such depth improves modeling accuracy, it also substantially increases the number of trainable parameters and the associated training overhead. 
In the predictive function~\eqref{eq:y_hat} of our system model, each cluster \( c \) has a neural network $\boldsymbol{\omega}$ parameterized by the weights \( \boldsymbol{\theta}_c^F \in \mathbb{R}^{M \times F} \), that is shared among all cells in that cluster. To reduce model complexity and training overhead, we restrict the cluster-specific parameters \( \boldsymbol{\theta}_c \in \mathbb{R}^{M \times 1} \) to correspond only to the final layer of the neural network with weights $\boldsymbol{\theta}_c^F = \{\boldsymbol{\theta}_{base}, \boldsymbol{\theta}_c\}$, which are pre-trained on a global dataset. Here, $\boldsymbol{\theta}_{base}\in \mathbb{R}^{M \times F-1}$ represents the first $F-1$ layers. The pre-training dataset is global in the sense that the data is from all the \acp{BS} covering the entire considered geographical area. This design enables transfer learning, whereby knowledge learned from the global predictor is reused and adapted locally with a small number of trainable parameters, allowing efficient specialization to cluster-specific conditions while preserving the generalization capabilities of the global model~\cite{9134370}.

While the full parameter set is given by $\{\boldsymbol{\theta}_{base}, \boldsymbol{\theta}_c\}$, the weights $\boldsymbol{\theta}_{base}$ remain fixed during the cluster-specific updates. Therefore, for the sake of notational brevity, we denote the predictive model with $\boldsymbol{\theta}_c$ as in~\eqref{eq:y_hat}, where the dependence on the constant base parameters $\boldsymbol{\theta}_{base}$ is implicit.

In the following, we provide an illustrative example of how this system model is used.

\begin{exam}
A high-level illustration of the system model is provided in Fig.~\ref{fig:system}. We model the \ac{QoS} $\mathbf{Q}_i(t)\in\mathbb{R}^d$ as a Gaussian random vector with $d$ QoS metrics of interest for each cell \( i\in\mathcal{I} \) at the current time \(t\). To provide a concrete example in this section, $\mathbf{Q}_i(t)$ is a vector consisting of latency $L_i$ and packet loss $P_i$ that follows a bivariate Gaussian distribution:
\[
\mathbf{Q}_i (t) = \begin{bmatrix} L_i \\ P_i \end{bmatrix} \sim \mathcal{N}(\boldsymbol{\mu}_i,\boldsymbol{\Sigma}_i) \quad \forall i=1,...,N.
\]
Here, $L_i\ge0$, $P_i\ge0$. 
Note that we view the elements $L_i(t)$ and $P_i(t)$ of vector $\mathbf{Q}_i(t)$, and the parameters of the Gaussian distribution $\boldsymbol{\mu}_i(t)\in\mathbb{R}^d$ and $\boldsymbol{\Sigma}_i(t)\in\mathbb{R}^{d^2}$, to be time dependent. However, we suppress the notation of time $t$ in the variables for readability. 

The goal is to train the predictive model in~\eqref{eq:y_hat} for each cluster that outputs the Gaussian distribution parameters 
$\boldsymbol{\mu}_{(i, h)}$ and $\boldsymbol{\Sigma}_{(i, h)}$ for the future multivariate distribution $\mathbf{Q}_i (t+h) \sim \mathcal{N}( \boldsymbol{\mu}_{(i, h)}, \boldsymbol{\Sigma}_{(i, h)})$ over a time horizon $h$. By predicting the Gaussian distribution, we can calculate the probability that a user in cell $i$ receives at least the desired future QoS \(\mathbf{Q}_d = \begin{bmatrix} L_{d} & P_{d} \end{bmatrix}^\intercal\), where \(L_{d}\) is the allowed latency and \(P_{d}\) is the allowed packet loss rate. This is done by the following procedure:
\[
\Pr(L_i \leq L_{d}, P_i \leq P_{d}) = \Phi \left( L_{d}, P_{d}; \boldsymbol{\hat{\mu}}_{(i,\, h)}, \boldsymbol{\hat{\Sigma}}_{(i,\, h)} \right), 
\]
where $\Phi$ is a two-dimensional real vector's \ac{CDF} with expectation $\boldsymbol{\hat{\mu}}_{(i,h)}$ and covariance $\boldsymbol{\hat{\Sigma}}_{(i,h)}$. This can be calculated by 
the \ac{PDF} of the predicted bivariate Gaussian distribution.
We assume that $\boldsymbol{\hat{\mu}}{(i,h)}$ and $\boldsymbol{\hat{\Sigma}}{(i,h)}$ are such that the probability $\Pr(L_i<0, P_i<0)$ is negligible. 
This assumption is justified by the fact that the Gaussian approximation is applied in operating regimes where the predicted mean values are strictly positive and the estimation variance is small relative to the mean. Hence, the probability mass assigned to the non-physical negative region is insignificant. 
\end{exam}

Building upon the system model, the next section defines the formal problem formulation addressed in this study.

\section{Problem Formulation}
\label{sec:probform}
To construct $C$ clusters from $N$ cells, optimize the number of clusters $C$, and train a predictive model for each cluster that predicts the set of Gaussian distribution parameters, we formulate an optimization problem in this section.
First, we introduce binary assignment variables \( a_{ic} \):
\[
a_{ic} = 
\begin{cases} 
1, & \text{if cell } i \text{ is assigned to cluster } c, \\ 
0, & \text{otherwise}.
\end{cases}
\]
Here, each cell $i$ must be assigned to exactly one cluster $c$:
\[
\sum_{c=1}^{C} a_{ic} = 1, \quad \forall i=1,\ldots,N.
\]
Depending on latency, scalability, and coordination requirements, this neural network can be deployed on edge servers based on \acp{BS}, centralized RAN controllers, cloud-based network management platforms, or even dedicated accelerator hardware integrated into \ac{O-RAN}.

To formulate this problem, we define three loss terms to optimize in the following three subsections.

\subsection{Prediction Loss}  
\label{sec:pf1}
The prediction loss between the predicted $\mathbf{\hat{y}}_i (t+h)$ from~\eqref{eq:y_hat} and the ground truth \(\mathbf{y}_i(t+h)\) is modeled by:
\[
\sum_{c=1}^{C}\sum_{i=1}^{N} a_{ic}\, \ell\big(\boldsymbol{\omega}(\mathbf{x}_i; \boldsymbol{\theta}_c), \mathbf{y}_i (t+h)\big).
\]
We sum the loss functions $\ell(\cdot, \cdot)$ over the number of clusters $C$ and cells $N$ to retrieve the total loss. For practical purposes, we choose the loss function to be the \ac{MSE}. The binary cluster assignment variable $a_{ic} \in \{0,1\}$ is 1 if cell \(i\) is assigned to cluster \(c\). Otherwise, $a_{ic}=0$. In other words, each cell is counted only once in the sum of the prediction loss.

\subsection{Clustering Consistency Loss}  
To encourage grouping of cells with similar data distributions, we penalize a high separation of data distributions among cells within the same cluster, using a distance measure \( D \):
\[
\lambda\, \sum_{c=1}^{C}\sum_{i,j=1}^{N} a_{ic}\,a_{jc}\, D\big(\mathcal{N}_i, \mathcal{N}_j\big),
\]
where \( \lambda \geq 0 \) is a weighting factor. For simplicity, we denote $\mathcal{N}(\boldsymbol{\mu}_i,\boldsymbol{\Sigma}_i)$ as $\mathcal{N}_i$. If the distributions $\mathcal{N}_i$ and $\mathcal{N}_j$ are dissimilar, the outcome of the distance metric $D\big(\mathcal{N}_i, \mathcal{N}_j\big)$ will be high. On the contrary, the distance metric is low if $\mathcal{N}_i = \mathcal{N}_j$. We will discuss suitable choices of the distance metric in Section~\ref{sec:analysis} and choose a distance metric in Section~\ref{sec:setup}. 

From the distance metric, we want to find the binary weights $a_{ic}$ to cluster cells with similar data distributions.
The weights $a_{ic}$ are utilized to find the cluster sets $\{\mathcal{S}_c \}_{c=1}^C$ from $\mathcal{S}_c = \{ i \in \mathcal{I} \mid c^*_i = \argmax_c a_{ic} \}$ that contains the cells $i$ that belongs to cluster $c$.

\subsection{Cluster Number Penalty}
\label{sec:pf3}
To optimize the number of clusters, we introduce the regularization term $\beta C$, where $\beta > 0$.
Optimizing the number of clusters implies training fewer predictive models. By minimizing the number of predictive models, training and communication become more efficient.

\subsection{Optimization Problem}
Now we are in the position to formulate a structured training problem by adding all three terms defined in the Sections~\ref{sec:pf1}--\ref{sec:pf3} above. We propose the following optimization problem:
\begin{equation}
\begin{aligned}
\min_{C,\; \{a_{ic}\},\; \{\boldsymbol{\theta}_c\}} \quad & \sum_{c=1}^{C}\sum_{i=1}^{N} a_{ic}\, \ell\big(\boldsymbol{\omega}(\mathbf{x}_i; \boldsymbol{\theta}_c), \mathbf{y}_i (t+h)\big)\\
& + \lambda\, \sum_{c=1}^{C}\sum_{i,j=1}^{N} a_{ic} \, a_{jc}\, D\big(\mathcal{N}_i, \mathcal{N}_j\big) + \beta C \\[1mm]
\subject \quad & \sum_{c=1}^{C} a_{ic} = 1, \quad \forall i=1,\ldots,N, \\[1mm]
& a_{ic} \in \{0,1\}, \quad \forall i=1,\ldots,N,\; c=1,\ldots,C.\\[3mm]
\end{aligned}
\label{eq:opt} \\[3mm]
\end{equation}
Because this problem is non-convex, both due to the neural network $\boldsymbol{\omega}$ with non-linear activation functions on weights $\boldsymbol{\theta}_c$ and the combinatorial nature of the clustering assignments, we propose a \ac{BCD} approach to arrive at an approximate solution. One of the BCD blocks is solved at each \ac{BS} separately, and the other BCD block is solved on a central server. In the following section, we present our solution to this problem and show how to perform such a challenging approach.

\section{Solution Approach}
\label{sec:solution}
In this section, we address the computational challenges of the original optimization problem~\eqref{eq:opt} by introducing a suitable relaxation that renders the formulation tractable while preserving its essential structure. Building on a relaxed problem, we then propose a \ac{BCD}–based algorithm that decomposes the optimization into two solvable blocks. Finally, we analyze the convergence properties of the proposed method and establish conditions under which the algorithm converges.

\subsection{Problem Relaxation}
\label{sec:probrelax}
In order to obtain a formulation that allows us to derive an approximate solution, we relax optimization problem~\eqref{eq:opt}. But first, we need to make some definitions.
We set $\boldsymbol{\Theta} = [\boldsymbol{\theta}_1, \ldots, \boldsymbol{\theta}_{C}]  \in \mathbb{R}^{M\times C}$ where $M$ is the length of the vectors $\boldsymbol{\theta}_{c}$ with $c=1,...,C$, containing the weights for the predictive models. 
We define the matrix \(\mathbf{A} \in \mathbb{R}^{N \times C}\) as \(\mathbf{A} = [a_{ic}]_{i=1,\dots,N,\, c=1,\dots,C}\).

With these definitions, we can now relax~\eqref{eq:opt}. We do that by the following two steps:

\begin{itemize}
    \item The elements $a_{ic} \in \{0,1\}$ in $\mathbf{A}$ indicates whether cell $i$ is in cluster $c$ from our definition in Section~\ref{sec:pf1}. Since $a_{ic}$ is binary, it means that if two cells are in the same cluster, their corresponding rows are identical. Intuitively, this means that the rank of $\mathbf{A}$, which counts the number of independent rows, is equal to the number of clusters $C$. For example, if we only have two sets of unique rows in matrix A, the cells will be placed in only 2 different clusters. Such an observation suggests a relaxation of~\ref{eq:opt} by replacing the number of clusters $C$ in the last term of~\eqref{eq:opt} with the rank of $\mathbf{A}$. However, this would make the problem combinatorial and NP-hard~\cite{recht2010guaranteed}, because the rank function is non-convex and discontinuous. To circumvent this difficulty, we replace the number of clusters $C$ in the last term of~\eqref{eq:opt} with the nuclear norm $\|\mathbf{A}\|_*$, defined as the sum of the singular values of $\mathbf{A}$. This convex relaxation is a common relaxation of the rank in the literature, because the number of non-zero singular values is equal to the rank~\cite{candes2008exactmatrixcompletionconvex}.


    \item The original assignment constraints in~\eqref{eq:opt} enforce a hard clustering, requiring $a_{ic} \in \{0,1\}$. This discrete constraint makes the optimization problem combinatorial, non-convex, and generally intractable for large $N$ and $C$. To address this, we relax the binary constraints by allowing the assignment variables $a_{ic}$ to lie in the probability simplex. In this relaxed formulation, $a_{ic}$ can be interpreted as the likelihood that cell $i$ belongs to cluster $c$. The final discrete cluster assignment is obtained by selecting the cluster with the largest probability for each cell $i$ by finding $c^*_i = \argmax_c a_{ic}$. For each cluster, we find the cluster sets $\{\mathcal{S}_c \}_{c=1}^{C_{\text{max}}}$ from $\mathcal{S}_c = \{ i \in \mathcal{I} \mid c^*_i = \argmax_c a_{ic} \}$ that contains the cells $i$ that belongs to cluster $c$. The number of clusters are then found by computing $C = \left| \{ c \in \{1, \dots, C_{\text{max}}\} : \mathcal{S}_c \neq \emptyset \} \right|$.
\end{itemize}

With the relaxations defined above, we now modify~\eqref{eq:opt}. The expression for the relaxed optimization problem becomes:
\begin{equation}
\begin{aligned} 
\min_{\mathbf{A},\, \boldsymbol{\Theta}} \quad & F(\mathbf{A},\boldsymbol{\Theta}) = 
\sum_{c=1}^{C}\sum_{i=1}^{N} a_{ic}\,\ell\big(\boldsymbol{\omega}(\mathbf{x}_i; \boldsymbol{\theta}_c), \mathbf{y}_i(t+h)\big) \\ 
& \quad\quad 
+ \lambda\, \sum_{c=1}^{C}\sum_{i,j=1}^{N}  a_{ic}\,a_{jc}\, D\big(\mathcal{N}_i, \mathcal{N}_j\big) 
+ \beta\|\mathbf{A}\|_*,\\[2mm]
\subject \quad & a_{ic} \geq 0, \:\: \sum_{c=1}^{C} a_{ic} = 1, \quad \forall i=1,\ldots,N, 
\end{aligned}
\label{eq:relaxed_opt}\\[3mm]
\end{equation}

Optimization problem~\eqref{eq:relaxed_opt} is a continuous but still nonconvex optimization problem because the assignments $a_{ic}$ multiply the per-cluster regression losses. Furthermore, computing a global minimizer of the relaxed formulation remains computationally intractable because the formulation can be classified as k-means and similar NP-hard clustering problems~\cite{11045092, Aloise2009}. However,~\eqref{eq:relaxed_opt} is block-multiconvex. The $\boldsymbol{\Theta}$-subproblem is convex for 
fixed $\mathbf{A}$, and the $\mathbf{A}$-subproblem is convex for fixed $\boldsymbol{\theta}$ (when choosing $D$ properly, which we will show in Section~\ref{sec:analysis}). This justifies the use of \ac{BCD}, as we show next.

In order to map optimization problem~\eqref{eq:relaxed_opt} to the \ac{BCD} framework in~\cite{xu2013block}, we first define the functions
\begin{equation}
f(\mathbf{A},\boldsymbol{\Theta}) = \sum_{c=1}^{C}\sum_{i=1}^{N} a_{ic}\,\ell\Big(\boldsymbol{\omega}(\mathbf{x}_i; \boldsymbol{\theta}_c), \mathbf{y}_i(t+h)\Big) + g(\mathbf{A}),
\label{eq:f}
\end{equation}
\begin{equation}
g(\mathbf{A}) = \lambda\, \sum_{c=1}^{C}\sum_{i,j=1}^{N}a_{ic}\,a_{jc}\, D\big(\mathcal{N}_i, \mathcal{N}_j\big),
\label{eq:g}
\end{equation}
\begin{equation}
r_\mathbf{A}(\mathbf{A}) = \beta\|\mathbf{A}\|_*.
\label{eq:r}
\end{equation}
Again, for clarification, $\boldsymbol{\Theta} \in \mathbb{R}^{M\times C}$ and \(\mathbf{A} \in \mathbb{R}^{N \times C}\). 
Now, using vector notation, we can find an alternative way to write $F(\mathbf{A},\boldsymbol{\Theta})$ in~\eqref{eq:relaxed_opt}. 
Let $\mathbf{D}\in\mathbb{R}^{N\times N}$ with $[\mathbf{D}]_{ij}=D(\mathcal{N}_i,\mathcal{N}_j)$, and define vectors $\boldsymbol{\ell}_c,\boldsymbol{a}_c\in\mathbb{R}^N$ by
$[\boldsymbol{\ell}_c]_i=\ell(\boldsymbol{\omega}(\mathbf{x}_i;\boldsymbol{\theta}_c),\mathbf{y}_i(t+h))$ and $[\boldsymbol{a}_c]_i=a_{ic}$.
Using $\mathbf{D}$, $\boldsymbol{\ell}_c$, and $\boldsymbol{a}_c$, we can simplify $F(\mathbf{A},\boldsymbol{\Theta})$ in~\eqref{eq:relaxed_opt} as
\begin{equation}
F(\mathbf{A},\boldsymbol{\Theta}) = \sum_{c=1}^{C} \boldsymbol{a}^\top_{c}\,\boldsymbol{\ell}_c + \lambda\, \sum_{c=1}^{C} \boldsymbol{a}_{c}^\top \mathbf{D}\, \boldsymbol{a}_{c}+\beta\|\mathbf{A}\|_*.
\label{eq:quadratic}
\end{equation}

By the definitions~\eqref{eq:f}-\eqref{eq:quadratic}, we can implement the non-trivial~\ac{BCD} formulation from~\cite{xu2013block}, as presented in the following subsection. 

\subsection{Implementation of Block Coordinate Descent}
\label{sec:bcd}
To approach the objective function $F(\mathbf{A},\boldsymbol{\Theta})$ in~\eqref{eq:quadratic}, we use \ac{BCD}. In the following subsections, we define the two blocks to jointly optimize $\mathbf{A}$ and $\boldsymbol{\Theta}$ in~\eqref{eq:relaxed_opt}.

\subsubsection{Block 1}
We define the problem of block 1 as 
\begin{equation}
\boldsymbol{\Theta}^k = \argmin_{\boldsymbol{\Theta}} \left\{ f(\mathbf{A}^{k-1}, \boldsymbol{\Theta}) + \alpha_{\boldsymbol{\Theta}} \| \boldsymbol{\Theta} - \boldsymbol{\Theta}^{k-1} \|^2 \right\}.
\label{eq:block1}
\end{equation}
Here, $2\alpha_{\boldsymbol{\Theta}} = L_{\boldsymbol{\Theta}}^{k-1}$ denotes the step size for each update of $\boldsymbol{\Theta}^{k-1}$. 
By solving~\eqref{eq:block1}, we update the weights $\boldsymbol{\Theta}^{k}$.

We now calculate the gradient for each predictive model with weights $\boldsymbol{\theta}_c \in \boldsymbol{\Theta}$.
The first-order optimality condition leads to
\[
     \nabla_{\boldsymbol{\theta}_c} f(\mathbf{A}^{k-1}, \boldsymbol{\theta}_c) + L_{\boldsymbol{\theta}_c}^{k-1} (\boldsymbol{\theta}_c - \boldsymbol{\theta}_c^{k-1}) = 0.
\]
By solving for $\boldsymbol{\theta}_c^k$, we get 
\begin{equation}
    \boldsymbol{\theta}_c^k = \boldsymbol{\theta}_c^{k-1} - \frac{1}{L_{\boldsymbol{\theta}_c}^{k-1}} \nabla_{\boldsymbol{\theta}_c} f(\mathbf{A}^{k-1}, \boldsymbol{\theta}_c^{k-1}).
    \label{eq:nonfl}
\end{equation}
This corresponds to a gradient descent step with a step size of \( 1/L_{\boldsymbol{\theta}_c}^{k-1} \). 
The update in~\eqref{eq:nonfl} is performed locally at each \ac{BS} to obtain the model weights $\boldsymbol{\theta}_{c,i}^k$, using only the data associated with cell $i$. To leverage information across the cells in a cluster, a cluster model is aggregated centrally via \ac{FL}, where the updated parameters from all participating cells in a cluster are combined to form the new cluster model parameters
\begin{equation}
\boldsymbol{\theta}_c^k = \frac{1}{| \mathcal{S}_c |} \, \sum_{i \in \mathcal{S}_c} 
\boldsymbol{\theta}_{c,i}^k.
\end{equation}
where $\mathcal{S}_c$ is the set of cells contributing to cluster $c$.



\subsubsection{Block 2}
\label{sec:b2}
Initially, we set $C=C_{\max}=N$. Furthermore, we define the nuclear norm as the regularization function \(r_\mathbf{A}(\mathbf{A})=\|\mathbf{A}\|_*\).
We aim to optimize the following expression:
\begin{equation}
\mathbf{A}^{k} = \argmin_{\mathbf{A}} \left\{ f(\mathbf{A}, \boldsymbol{\Theta}^{k}) + \alpha_{\mathbf{A}} \|\mathbf{A} - \mathbf{A}^{k-1} \|^2 + \|\mathbf{A}\|_* \right\},
\label{eq:block2}
\end{equation}
As in block 1, we define $2\alpha_{\mathbf{A}}=L_\mathbf{A}^{k-1}$ as the step size. 
The first and third terms in~\eqref{eq:block2} represent solving the relaxed optimization problem~\eqref{eq:relaxed_opt}, while the second term controls how fast we update $\mathbf{A}$.

To solve this, we compute the gradient
\begin{equation*}
\begin{aligned}
\nabla_{a_{ic}} f(\mathbf{A}^{k-1}, \boldsymbol{\Theta}^{k}) & = \ell(\boldsymbol{\omega}(\mathbf{x}_i; \boldsymbol{\theta}^{k}_c), \mathbf{y}_i(t+h)) \\
& + 2\lambda \sum_{j=1}^{N} a^{k-1}_{jc}  D\big(\mathcal{N}_i, \mathcal{N}_j\big),
\end{aligned}
\end{equation*}
and perform a gradient step descent:
\[
z_{ic}^{k} = a_{ic}^{k-1} - \frac{1}{L_{\mathbf A}^k} \nabla_{a_{ic}} f(\mathbf{A}^{k-1}, \boldsymbol{\theta}_c^{k}).
\]
The values $z_{ic}^{k}$ are computed locally at each \ac{BS} $i$. They are then transmitted to the central server to form the temporary matrix $\mathbf{Z}^{k}$ as
\[\mathbf{Z}^{k} = \left[ z_{ic}^{k} \right]_{i=1,\dots,N, c=1,\dots,C}.\]

After the gradient step, we apply singular value thresholding (SVT)~\cite{cai2008singularvaluethresholdingalgorithm} to optimize the nuclear norm of $\mathbf{A}$. We update the BCD step as
\[
\mathbf{A}^{k} = \mathcal{S}_{\tau}\left( \mathbf{Z}^{k} \right).
\]
The function \(\mathcal{S}_\tau(\mathbf{X})\) is the SVT operator:
\[
\mathcal{S}_\tau(\mathbf{X}) = \mathcal{S}_\tau(\mathbf{U} \mathbf{\Sigma} \mathbf{V}^T) = \mathbf{U} \mathbf{\Sigma}_\tau \mathbf{V}^T.
\]
Here, \(\mathbf{X} = \mathbf{U} \mathbf{\Sigma} \mathbf{V}^T\) is the singular value decomposition of \(\mathbf{X}\), and \(\mathcal{S}_\tau\) applies soft-thresholding to the singular values $\sigma$, which are located on the diagonal of $\mathbf{\Sigma}$ :
\[
\sigma_\tau = \max(\sigma - \tau, 0).
\]

After we have retrieved $\mathbf{A}^{k}$, we employ Dykstra’s projection algorithm to map each row of $\mathbf{A}^{k}$ onto the probability simplex~\cite{dykstra1983algorithm}, thereby maintaining the probabilistic validity of the cluster assignment. The algorithm iteratively projects the unconstrained updates onto 
the valid range in~\eqref{eq:relaxed_opt}, ensuring that $a_{ic}$ stays between 0 and 1 and the rows of $\mathbf{A}_k$ sum to unity. This projection effectively finds the closest valid $\mathbf{A}_k$ to the BCD updates in the Euclidean sense. 

\begin{algorithm}[t]
\caption{\ac{BCD} for joint clustering and prediction.}
\label{alg:BCD}
\begin{algorithmic}[1]
\State \textbf{Input:} \(\lambda\), \(\tau\), \( \boldsymbol{\Theta}^{0} \), \( \mathbf{A}^{0} \), step sizes \( L_{\boldsymbol \Theta}^k, L_{\mathbf A}^k, \{\mathcal{S}_c \}_{c=1}^C\), 
maximum iterations \( K \).
\For{\( k = 1,2,\dots, K \)}
    \State \textbf{Block 1: Update Neural Network Weights \( \boldsymbol{\Theta} \)}
    \For{\( c = 1, \dots, C \) at cell $i$ if $i \in \mathcal{S}_c$} 
        \State Compute gradient at \ac{BS} $i$:
        \[
        \nabla_{\boldsymbol{\theta}_c} f_i(\mathbf{A}^{k-1}, \boldsymbol{\theta}_{c}^{k-1}).
        \]
        \State Update $\boldsymbol{\theta}_{c,i}^k$ at each \ac{BS} $i$:
        \[
        \boldsymbol{\theta}_{c,i}^k = \boldsymbol{\theta}_{c}^{k-1} - \frac{1}{L_{\boldsymbol \Theta}^k} \nabla_{\boldsymbol{\theta}_{c}} f_i(\mathbf{A}^{k-1}, \boldsymbol{\theta}_{c}^{k-1}).
        \]
    \EndFor

    \State Transmit all $\boldsymbol{\theta}_{c,i}^k$ to the central server.
    
    \State Aggregate the distributed model updates: 
    \[\boldsymbol{\theta}_c^k = \frac{1}{| \mathcal{S}_c |} \, \sum_{i \in \mathcal{S}_c} \boldsymbol{\theta}_{c,i}^k.\]

    \State \textbf{Block 2: Update Cluster Assignment Matrix \( \mathbf{A} \)}
    \State Compute gradient step at each \ac{BS} $i$:
    \begin{equation*}
    \begin{aligned}
    \nabla_{a_{ic}} f(\mathbf{A}^{k-1}, \boldsymbol{\Theta}^{k}) & = \ell(\boldsymbol{\omega}(\mathbf{x}_i; \boldsymbol{\theta}^{k}_c), \mathbf{y}_i(t+h)) \\ & + 2\lambda \sum_{j=1}^{N} a^{k-1}_{jc}  D\big(\mathcal{N}_i, \mathcal{N}_j\big).
    \end{aligned}
    \end{equation*}
    
    \State Update $z_{ic}^{k}$ locally at \acp{BS} $i$, $\forall c$:
    \[
    z_{ic}^{k} = a_{ic}^{k-1} - \frac{1}{L_{\mathbf A}^k} \nabla_{a_{ic}} f(\mathbf{A}^{k-1}, \boldsymbol{\theta}_c^{k}).
    \]
    \State Transmit $z_{ic}^{k}$ from \acp{BS} $i$ to the central server and form
     \[\mathbf{Z}^{k} = \left[ z_{ic}^{k} \right]_{i=1,\dots,N, c=1,\dots,C}.\]
    
    \State Compute \( \mathbf{Z}^{k} = \mathbf{U} \mathbf{\Sigma} \mathbf{V}^T \).
    \State Apply thresholding to singular values:
    \[
    \mathbf{\Sigma}_\tau = \max(\mathbf{\Sigma} - \tau, 0).
    \]
    \State Reconstruct 
    \(
    \mathbf{A}^{k} = \mathbf{U} \mathbf{\Sigma}_\tau \mathbf{V}^T.
    \)
    \State Project \( \mathbf{A}^{k} \) onto probability simplex by Dykstra's~\cite{dykstra1983algorithm}.
    \State Compute the new cluster set $\mathcal{S}_c$ for each cluster $c$:
    \[
    \mathcal{S}_c = \{ i \in \mathcal{I} \mid c^*_i = \argmax_c a_{ic} \}.
    \]

\EndFor
\State \textbf{Output:} \( \boldsymbol{\Theta}^{K}, \mathbf{A}^{K}, \{ \mathcal{S}_c \}_{c=1}^C \).
\end{algorithmic}
\end{algorithm}

Finally, the cluster assignments in $\mathbf{A}_k$ are used to set $c^*_i=\arg\max_c a_{ic}$. We find the cluster sets $\{\mathcal{S}_c \}_{c=1}^{C_{\text{max}}}$ from $\mathcal{S}_c = \{ i \in \mathcal{I} \mid c^*_i = \argmax_c a_{ic} \}$ that contains the cells $i$ that belongs to cluster $c$. The number of clusters are then found by computing $C = \left| \{ c \in \{1, \dots, C_{\text{max}}\} : \mathcal{S}_c \neq \emptyset \} \right|$.

That concludes the BCD algorithm. The complete algorithm from both blocks is provided in Algorithm~\ref{alg:BCD}, with update period $\tau$ to keep the model parameters and cluster assignments updated to the dynamic environment. In the next subsection, we study under which assumptions Algorithm~\ref{alg:BCD} converges.


\subsection{Convergence Analysis}
\label{sec:analysis}
To prove the convergence of Algorithm~\ref{alg:BCD}, we first have to make some standard assumptions in the literature~\cite{xu2013block, yu2019parallel}.

\begin{asum}\label{as:smooth}(\textbf{$L$-smoothness}) 
The function \(f(\mathbf{A},\boldsymbol{\Theta})\) in~\eqref{eq:f} is $L$-smooth if for every bounded set \(\mathcal B\), there exists a constant \(L_{\mathcal B} > 0\) for all \((\mathbf A,\boldsymbol\Theta),(\mathbf A',\boldsymbol\Theta')\in\mathcal B\) such that:
\begin{equation} \label{eq:a1}
\|\nabla f(\mathbf{A},\boldsymbol{\Theta}) - \nabla f(\mathbf{A}',\boldsymbol{\Theta}')\| \leq L_{\mathcal B}\, \|(\mathbf{A},\boldsymbol{\Theta}) - (\mathbf{A}', \boldsymbol{\Theta}')\|.
\end{equation}
\end{asum}

\begin{asum}\label{as:Lower}(\textbf{Bounded Loss Function}) 
There exists a constant $\ell^{*}$ such that $\ell\big(\boldsymbol{\omega}(\mathbf{x}_i; \boldsymbol{\theta}_c), \mathbf{y}_i (t+h)\big) \geq \ell^{*}$ for all $\boldsymbol{\theta}_c$ and $\mathbf{x}_i$. 
\end{asum}

\begin{asum}\label{as:gconv}(\textbf{Convexity}) 
The function $g:\mathbb{R}^{N\times C} \to \mathbb{R}\cup\{+\infty\}$ defined in~\eqref{eq:g} is convex, lower semi-continuous, and proper in the sense of convex analysis.
\end{asum}

\begin{asum}\label{as:block-updates}(\textbf{Block update parameter bounds})
The parameters \(L_\mathbf{A}^{k}\) used in the block steps satisfy uniform bounds
\[
0<\underline L_{\mathbf A}\le L_{\mathbf A}^k\le\overline L_{\mathbf{A}}<\infty\quad\text{for all }k.
\]
The same goes for the parameters \(L_\mathbf{\Theta}^{k}\).
\end{asum}

\begin{asum}\label{as:boundedness}(\textbf{Lower bounded})
The objective function \(F(\mathbf A,\boldsymbol\Theta)\) 
is bounded below on its effective domain:
\[
\inf_{(\mathbf A,\boldsymbol\Theta)} F(\mathbf A,\boldsymbol\Theta) > -\infty.
\]
Moreover, the initial point \((\mathbf A^0,\boldsymbol\Theta^0)\) satisfies \(F(\mathbf A^0,\boldsymbol\Theta^0)<+\infty\).
\end{asum}

From Assumption~\ref{as:gconv}, we need $g(\mathbf{A})$ in~\eqref{eq:g} to be convex, lower semi-continuous, and proper. Hence, we need to make sure that block 2 is convex.

\begin{proposition}\label{prop1}
Let Assumptions~\ref{as:smooth} and~\ref{as:boundedness} hold. Then, Assumption~\ref{as:gconv} holds if 
the matrix \(\mathbf{D}\) with entries
\(
D_{ij} = D\big(\mathcal{N}_i, \mathcal{N}_j\big)
\)
is \ac{PSD}.
\end{proposition}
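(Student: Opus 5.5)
We need $g(\mathbf A)=\lambda\sum_{c}\boldsymbol a_c^\top\mathbf D\boldsymbol a_c$ to be convex, lower semi-continuous and proper on $\mathbb R^{N\times C}$. The plan is to use the quadratic representation~\eqref{eq:quadratic} and vectorize. Write $\operatorname{vec}(\mathbf A)=[\boldsymbol a_1^\top,\dots,\boldsymbol a_C^\top]^\top\in\mathbb R^{NC}$. Then
\[
g(\mathbf A)=\lambda\,\operatorname{vec}(\mathbf A)^\top(\mathbf I_C\otimes\mathbf D)\operatorname{vec}(\mathbf A)=\lambda\,\mathrm{tr}(\mathbf A^\top\mathbf D\mathbf A),
\]
so $g$ is a quadratic form with Hessian $2\lambda(\mathbf I_C\otimes\mathbf D)$. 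We may assume $\mathbf D$ is symmetric: distance measures are symmetric, and otherwise one replaces $\mathbf D$ by $\tfrac12(\mathbf D+\mathbf D^\top)$ without changing $g$.

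\textbf{Convexity.} The eigenvalues of $\mathbf I_C\otimes\mathbf D$ are those of $\mathbf D$, each repeated $C$ times. Hence $\mathbf D\succeq 0$ implies $\mathbf I_C\otimes\mathbf D\succeq0$, and with $\lambda\ge0$ the Hessian is PSD. A twice-differentiable function with PSD Hessian everywhere is convex. Alternatively, one can argue directly: each map $\boldsymbol a_c\mapsto\boldsymbol a_c^\top\mathbf D\boldsymbol a_c=\|\mathbf D^{1/2}\boldsymbol a_c\|^2$ is a squared norm composed with a linear map, hence convex, and a nonnegative sum of convex functions is convex.

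\textbf{Lower semi-continuity and properness.} $g$ is a polynomial in the entries of $\mathbf A$, so it is continuous and in particular lower semi-continuous. It is finite-valued everywhere, hence never $+\infty$ and not identically $+\infty$. It is also bounded below by $0$ when $\mathbf D\succeq0$, so it never takes the value $-\infty$. Thus $g$ is proper.

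\textbf{Role of the hypotheses.} Assumptions~\ref{as:smooth} and~\ref{as:boundedness} are essentially consistency conditions. Smoothness of $f$ in $\mathbf A$ is compatible with the quadratic $g$, whose gradient $2\lambda\mathbf D\mathbf A$ is globally Lipschitz with constant $2\lambda\|\mathbf D\|_2$. Lower boundedness of $F$ is reinforced by $g\ge0$. If the effective domain is instead taken as the simplex-constrained set, one adds its indicator function. That set is a closed convex set, so its indicator is convex, lower semi-continuous and proper. Because the constraint set is nonempty, the sum with $g$ remains proper.

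\textbf{Main obstacle.} The only nontrivial point is the symmetry/PSD interplay. A PSD condition is only meaningful for the symmetric part of $\mathbf D$, so I would state explicitly that $D$ is symmetric, or pass to the symmetric part. Everything else follows from standard facts.
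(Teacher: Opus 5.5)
Your proof is correct and follows essentially the paper's route: the paper also writes $g(\mathbf A)=\lambda\sum_c \boldsymbol a_c^\top\mathbf D\boldsymbol a_c$, computes its Hessian as the block-diagonal matrix $\mathrm{diag}(2\lambda\mathbf D,\ldots,2\lambda\mathbf D)$ (your $2\lambda(\mathbf I_C\otimes\mathbf D)$), deduces convexity from $\mathbf D\succeq 0$, and gets properness from finiteness. Your version is somewhat cleaner in two places: you obtain lower semi-continuity directly from continuity of a polynomial, whereas the paper infers it from convexity (valid here only because $g$ is finite on all of $\mathbb{R}^{N\times C}$), and you handle symmetry of $\mathbf D$ explicitly, whereas the paper's gradient and Hessian formulas tacitly assume $D_{ii'}=D_{i'i}$.
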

\begin{proof}
    The proof of Proposition~\ref{prop1} is provided in Appendix.
\end{proof}

With Assumptions~\ref{as:smooth}--\ref{as:boundedness} and Proposition~\ref{prop1}, we are ready to present the main analytical result. In the following proposition,  we prove that Algorithm~\ref{alg:BCD} converges to a critical point of the function $F(\mathbf A,\boldsymbol\Theta)$.

\begin{proposition}
\label{prop:bcd-convergence}
Let $(\mathbf{A}^{k}, \boldsymbol{\Theta}^{k})_{k=1}^{\infty}$ be the sequence generated by Algorithm~\ref{alg:BCD}. Under Assumptions~\ref{as:smooth}-\ref{as:boundedness}, the sequence asymptotically converges to a critical point of the 
function $F(\mathbf{A}, \mathbf{\Theta})$, i.e., for $\tilde{\mathbf{A}} = \lim_{k\rightarrow \infty} \mathbf{A}^{k}$ and $\tilde{\mathbf{\Theta}} = \lim_{k\rightarrow \infty} \mathbf{\Theta}^{k}$, we have

    \begin{align}
        \partial L( \tilde{\mathbf{A}}, \tilde{\mathbf{\Theta}}) = 0.
    \end{align}
\end{proposition}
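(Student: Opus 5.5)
We cast Algorithm~\ref{alg:BCD} as an instance of the block prox-linear scheme of~\cite{xu2013block} and reuse that framework's standard descent/subsequence/Kurdyka--\L{}ojasiewicz (KL) argument. The smooth part is $f(\mathbf A,\boldsymbol\Theta)$ in~\eqref{eq:f}. The nonsmooth part is $r_{\mathbf A}(\mathbf A)=\beta\|\mathbf A\|_*+\iota_{\Delta}(\mathbf A)$, where $\iota_\Delta$ is the indicator of the row-wise simplex constraint in~\eqref{eq:relaxed_opt}. The $\boldsymbol\Theta$-block has no regularizer. The Block~1 update~\eqref{eq:nonfl} is a gradient step with step $1/L^{k-1}_{\boldsymbol\Theta}$. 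The FL averaging is treated as exact when the per-cell losses share a common step, since the average of local gradient steps equals a gradient step on the cluster loss, up to the factor $1/|\mathcal S_c|$, which we absorb into $L^k_{\boldsymbol\Theta}$. The Block~2 update is a gradient step on $f$ followed by a proximal map of $r_{\mathbf A}$. We therefore first argue that SVT followed by Dykstra's projection realizes, or is replaced in the analysis by, the proximal operator of $r_{\mathbf A}$. The critical-point condition in the statement is then read as $0\in\partial F(\tilde{\mathbf A},\tilde{\boldsymbol\Theta})$.

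\textbf{Step 1 (sufficient decrease).} By Assumption~\ref{as:smooth}, the descent lemma holds on the bounded set containing the iterates. Assumption~\ref{as:block-updates} lets us take $L^k_{\boldsymbol\Theta},L^k_{\mathbf A}$ at least the blockwise Lipschitz constants and uniformly bounded. Using the prox characterization of each block, we obtain
\[
F(\mathbf A^{k-1},\boldsymbol\Theta^{k-1})-F(\mathbf A^{k},\boldsymbol\Theta^{k})\ge \tfrac{c}{2}\big(\|\boldsymbol\Theta^k-\boldsymbol\Theta^{k-1}\|^2+\|\mathbf A^k-\mathbf A^{k-1}\|^2\big),
\]
with $c>0$ depending on $\underline L$. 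Convexity of $g$ (Proposition~\ref{prop1}, Assumption~\ref{as:gconv}) and of the nuclear norm keeps the $\mathbf A$-subproblem strongly convex, so the prox step is well defined.

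\textbf{Step 2 (square summability).} Assumptions~\ref{as:Lower} and~\ref{as:boundedness} say $F$ is bounded below. Telescoping then gives $\sum_k\|(\mathbf A^k,\boldsymbol\Theta^k)-(\mathbf A^{k-1},\boldsymbol\Theta^{k-1})\|^2<\infty$. The $\mathbf A$-iterates lie in the compact simplex product. Boundedness of $\boldsymbol\Theta^k$ we would take from level-boundedness implicit in the setting, or else add it as a standing hypothesis.

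\textbf{Step 3 (subsequential limits are critical).} From the prox optimality conditions we build an element $\mathbf w^k\in\partial F(\mathbf A^k,\boldsymbol\Theta^k)$ with
\[
\|\mathbf w^k\|\le (\overline L+L_{\mathcal B})\|(\mathbf A^k,\boldsymbol\Theta^k)-(\mathbf A^{k-1},\boldsymbol\Theta^{k-1})\|.
\]
This uses the Lipschitz gradients to move the gradients evaluated at the mixed point $(\mathbf A^{k-1},\boldsymbol\Theta^k)$ to the current iterate. Passing to a convergent subsequence, with closedness of the limiting subdifferential and continuity of $F$ on its domain, every cluster point satisfies $0\in\partial F$.

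\textbf{Step 4 (full sequence convergence), the main obstacle.} Upgrading subsequential convergence to convergence of the whole sequence, as the statement asserts, needs the KL property of $F$. The plan is to argue that the nuclear norm and the simplex indicator are semialgebraic, the quadratic term is polynomial, and the loss is real-analytic or semialgebraic for MSE with standard activations (definable in an o-minimal structure). Then Theorem~2.8 of~\cite{xu2013block} gives finite length $\sum_k\|\Delta^k\|<\infty$, so the sequence is Cauchy. I expect this step, together with justifying that SVT plus Dykstra's projection coincides with the exact prox of $\beta\|\cdot\|_*+\iota_\Delta$, to be the delicate part. If exact equality fails, I would treat the composite map as an inexact prox with summable errors, which the KL framework of~\cite{xu2013block} tolerates.
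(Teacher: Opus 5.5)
Your pipeline (sufficient decrease, square summability, a subgradient $\mathbf w^k\in\partial F$ bounded by the step length, then KL finite length) is the same Xu--Yin argument the paper uses. You are right that the nonsmooth part needs care. The paper's subsequence lemma writes the $\mathbf A$-block optimality condition as $0=\nabla_{\mathbf A}f+L^k_{\mathbf A}(\mathbf A^{k+1}-\mathbf A^k)$, which drops both $\beta\,\partial\|\cdot\|_*$ and the simplex normal cone. However, the step you postpone is a genuine gap, and your fallback does not close it.

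Block~2 computes $P_\Delta(\mathcal S_\tau(\mathbf Z^k))$, and a composition of two proximal maps is in general not the proximal map of the sum. Take $N=1$, $C=2$, where $\|\mathbf A\|_*=\|\mathbf a\|_2$, with $\mathbf z=(1,0)$ and threshold $\tfrac12$:
\begin{itemize}
\item SVT gives $(\tfrac12,0)$, and the simplex projection then gives $(\tfrac34,\tfrac14)$.
\item On $\Delta=\{(t,1-t)\}$, the derivative of $\tfrac12\|\mathbf a-\mathbf z\|^2+\tfrac12\|\mathbf a\|_2$ at $t=\tfrac34$ is $-\tfrac12+1/\sqrt{10}\neq 0$.
\item So the exact prox is a different point, roughly $(0.81,0.19)$.
\end{itemize}
Treating the composition as an inexact prox with summable errors also fails. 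The discrepancy $e(\mathbf Z)=P_\Delta(\mathcal S_\tau(\mathbf Z))-\mathrm{prox}_{\tau\|\cdot\|_*+\iota_\Delta}(\mathbf Z)$ is a continuous function of $\mathbf Z$. Along convergent iterates, $e(\mathbf Z^k)\to e(\tilde{\mathbf Z})$, which is generically nonzero, so the errors do not even vanish, let alone sum.

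Both your Step~1 and Step~3 therefore break. The $\mathbf A$-block decrease comes from comparing $\mathbf A^k$ with $\mathbf A^{k-1}$ in a strongly convex prox subproblem that $\mathbf A^k$ does not solve, and your $\mathbf w^k$ is built from that same prox optimality condition. With a constant step, the limits of the literal algorithm are fixed points of $\mathbf A\mapsto P_\Delta(\mathcal S_\tau(\mathbf A-\nabla_{\mathbf A}f/L_{\mathbf A}))$. These need not satisfy $0\in\nabla_{\mathbf A}f+\beta\,\partial\|\mathbf A\|_*+N_\Delta(\mathbf A)$; already in the $1\times 2$ case with a linear loss term one can find a fixed point that is not critical. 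To obtain the proposition, you must either analyze a modified Block~2 that returns the exact prox of $(\beta/L^k_{\mathbf A})\|\cdot\|_*+\iota_\Delta$, or weaken the conclusion to fixed points of the implemented map. The exact prox can be computed by alternating SVT and row-wise simplex projection with Dykstra-type correction terms until convergence, or to tolerances that an inexact KL scheme can absorb.

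Two further reductions also need repair. First, the FL aggregation is not a gradient step on $F(\mathbf A^{k-1},\cdot)$ when assignments are soft. The true gradient $\nabla_{\boldsymbol\theta_c}f=\sum_{i=1}^N a_{ic}\nabla\ell_i(\boldsymbol\theta_c)$ involves every cell with $a_{ic}>0$. The averaged update uses only $i\in\mathcal S_c$, and leaves $\boldsymbol\theta_c$ unchanged (or undefined) when $\mathcal S_c=\emptyset$. Block~1 is thus a gradient step on a different, $k$-dependent function. Neither the $\boldsymbol\Theta$-decrease of $F$ nor $\nabla_{\boldsymbol\Theta}f(\tilde{\mathbf A},\tilde{\boldsymbol\Theta})=0$ follows, unless $\tilde{\mathbf A}$ is a $0/1$ matrix or the server aggregates $a_{ic}$-weighted gradients over all cells. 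Second, boundedness of $\boldsymbol\Theta^k$ is not implied by Assumptions 1--5. Those give only local Lipschitz gradients and a lower bound, and gradient steps on a loss such as $e^{-\theta}$ diverge. So this must be added as a hypothesis rather than taken from "implicit level-boundedness"; the paper's Lemma~\ref{lem:subsequence} likewise simply posits it. With these repairs, your remaining steps, including the semialgebraicity argument for the KL property, go through as you describe.
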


\begin{proof}
    The proof of Proposition~\ref{prop:bcd-convergence} is provided in Section~\ref{sec:appendix}.
\end{proof}

We note that Algorithm~\ref{alg:BCD} returns a sub-optimal solution to~\eqref{eq:relaxed_opt}. Since the optimization problem is an approximation of the original problem~\eqref{eq:opt}, the solution produced by Algorithm~\ref{alg:BCD} can be interpreted as an approximate, feasible solution to~\eqref{eq:opt}.


\subsection{Computational--Communication Complexity}
In the proposed clustered framework, we use FL~\cite{mcmahan2023communicationefficient} as follows: each \ac{BS} computes local model updates using its own dataset. These updates are periodically transmitted to a central server, which aggregates them to update the cluster-level models. The update period $\tau$ is a design parameter that trades off convergence speed against communication overhead. Once updated, the cluster-level models are redistributed to the \acp{BS} belonging to each cluster.

To contextualize the proposed approach, we compare it with a global approach in which a single model is trained on a central server, and a local approach in which each cell maintains and trains its own predictive model. In the local approach, all computation is performed on-site and no model parameters or gradients are exchanged, resulting in zero communication overhead. In contrast, both the global and our proposed clustered approaches require backhaul communication from the \acp{BS} to the server. Assuming the same update frequency of 15 minutes and number of local training epochs, the bandwidth consumption in the backhaul communication for the global model is $B_{\text{global}}^{\text{up}}=NFb=12\cdot 108 172 \cdot 4 \approx 5.2$MB, where $F$ is the number of model parameters and $b$ is the number of bytes that represent each scalar parameter. For the cluster approach, the same communication cost is $B_{\text{cluster}}^{\text{up}}=NF_{\ell}b=12\cdot 1548 \cdot 4 \approx 74$KB, where $F_\ell$ is the number of model parameters for the last layer. The numbers of model parameters from the MLP are calculated by simple arithmetic from the basics of neural networks. The ratio $F / F_\ell\approx70$ reveals that the upstream traffic is reduced by a factor of 70.

In the downstream communication of the global approach, a single updated model is broadcast to all \acp{BS} with a communication cost of $B_{\text{global}}^{\text{down}}=Fb=\cdot 108 172 \cdot 4 \approx 430$KB. 
In the clustered approach, $C$ distinct cluster models are multicast from the central server to the \acp{BS}. Consequently, the downlink communication cost scales linearly with the number of clusters $C$, leading to the communication cost $B_{\text{clustered}}^{\text{down}}=CF_\ell b=6\cdot 1548 \cdot 4 \approx 37$KB if we assume a constellation of 6 clusters.
This motivates keeping $C$ as small as possible while still capturing spatial heterogeneity in the network. We see that the ratio of the communication cost between the clustered approach and the global is  $F / (CF_\ell)\approx12$.

When it comes to the computational complexity, the global approach updates all model parameters, resulting in a per-round training complexity $\mathcal{O}(EFS_\text{train})$ where the total number of parameters $F$, the number of epochs $E$, and the total number of local training samples $S_\text{train}$. In the clustered approach, the computational complexity is $\mathcal{O}(EF_\ell S_\text{train} + NC^2)$ assuming that the number of epochs and total number of training samples are identical to the global approach. The $NC^2$ term comes from the \ac{SVD} in the clustered approach. Assuming the same update frequency, this leads to a reduction in arithmetic complexity in our proposed clustered approach, the global training. 
This reduction is particularly significant for deep models where the last layer constitutes a small fraction of the total number of parameters.

\section{Numerical Evaluation}
\label{sec:eval}
To evaluate the proposed framework, we generate a dataset using a combination of the ns-3 and Sionna simulators to obtain realistic data. The two upcoming subsections describe how the dataset is generated and present the numerical results of our predictive framework based on the simulated dataset.

\subsection{Experimental Setup}
\label{sec:setup}
To evaluate the proposed \ac{pQoS} framework, we conduct extensive wireless network simulations featuring high-mobility vehicular users in a realistic deployment scenario. For the purpose of reproducibility, the simulation environment is available in our public repository\footnote{https://github.com/osst3224/ns3-rt-mobility}.
The simulations capture both spatial and temporal variations in network conditions and build on a previous integration of the ns-3 network simulator~\cite{Riley2010} with the Sionna~\cite{hoydis2023sionnaopensourcelibrarynextgeneration} link-level ray-tracing engine~\cite{Pegu2505}. The Sionna environment enables the incorporation of realistic propagation characteristics into the ns-3 discrete network simulator by replacing the standard channel models with ray-tracing. 

We define vehicular mobility traces representative of typical city traffic patterns in the predefined Munich scene from Sionna. We deploy 30 connected vehicles along the defined mobility traces. Each vehicle acts as a \ac{UE}, connecting to the cell with the strongest signal. To reflect realistic variations in network load over a day, the traffic is modeled as a time-dependent function over 24 hours. For example, a low network traffic background intensity is applied during early morning hours, a moderate background load during daytime, and peak demand during rush hours with high background. We utilize a channel quality-aware scheduler that supplies throughput fairness. To further capture the effect of network heterogeneity, user mobility speeds are varied between 30 km/h and 100 km/h. This dynamic mobility profile allows us to analyze the robustness of the pQoS models under varying conditions. The 12 cells are placed at inter-site distances to cover the entire map with reasonable connectivity. The carrier frequency is set to 3.5 GHz with a bandwidth of 20 MHz, and the maximum transmit power of each cell is fixed at 30 dBm. 


To collect a dataset with network data from different layers, we integrate a FlowMonitor logging system with a sampling period of 1 second. The FlowMonitor is a network monitoring package within ns-3 that can retrieve samples from the simulation stack at regular intervals across all active flows. We record a set of network, mobility, and radio-level measurements that characterize the communication downlink conditions and the user context. The logs include the current timestamp, identifiers for the active data flow and serving cell, and the instantaneous load of that cell. User mobility is captured through the UE’s geographical position, speed, and movement direction. Packet-level statistics comprise the packet size, the number of packets transmitted from the cell, and the number of packets received at the UE during the sampling interval. The packet-level statistics also include average end-to-end latency, throughput, packet loss rate, and jitter over the sampling period. Radio conditions are described by the measured \ac{SINR}, \ac{RSRP}, and an indicator of line-of-sight or non-line-of-sight propagation. 

The predictive model is implemented as a \ac{MLP} designed to predict the future multidimensional distribution of key QoS indicators within a given cell over the prediction horizon of 1 hour. For each prediction cycle, the neural network receives as input all network\mbox{-}, mobility\mbox{-}, packet\mbox{-}, and radio-level measurements collected during the most recent sampling period. These features provide a detailed description of the instantaneous operating conditions, reflecting traffic load, user movement, packet performance, and radio propagation quality. To incorporate a long-term temporal context, the input also includes the parameters of a fitted multidimensional Gaussian distribution that summarizes the joint behavior of latency, jitter, and RSRP in the same cell over the preceding 15-minute period. Together, these inputs represent both short-term dynamics and historical trends that are relevant for a long-term prediction horizon.
The output of the neural network is the predicted joint distribution of latency, jitter, and RSRP 1 hour into the future for the same cell. 
The dataset for each cell and hour, containing around 10,000 samples, is split into 70\% training, 10\% validation, and 20\% test sets. The predictive \ac{MLP} is implemented with three hidden layers of 256, 256, and 128 neurons using ReLU activations and 20\% dropout. The model is trained using the Adam optimizer with learning rate $10~{-3}$, batch size 32, \ac{MSE} loss, and a fixed random seed for reproducibility. Furthermore, we set the update period $\tau=15$ minutes.

To measure the distance between two Gaussian distributions, we go for the Hellinger kernel, which is \ac{PSD}~\cite{hkernel}. The Hellinger kernel $k(\mathcal{N}_i, \mathcal{N}_j)$ has the closed loop form 
\begin{equation*}
\text{exp}\left(\frac{(\det\boldsymbol{\Sigma}_i \det\boldsymbol{\Sigma}_j)^{1/4}}{\det\left(\frac{\boldsymbol{\Sigma}_i + \boldsymbol{\Sigma}_j}{2}\right)^{1/2}} e^{-\frac{1}{8} (\boldsymbol{\mu}_i - \boldsymbol{\mu}_j)^T \left(\frac{\boldsymbol{\Sigma}_i + \boldsymbol{\Sigma}_j}{2}\right)^{-1} (\boldsymbol{\mu}_i - \boldsymbol{\mu}_j)}\right),
\end{equation*}
in the special case where $\mathcal{N}_i$ and $\mathcal{N}_j$ are Gaussians. To keep the \ac{PSD} properties of the Hellinger kernel, and keep the properties of our optimization problem so that we still want to minimize~\eqref{eq:relaxed_opt} with respect to $\mathbf{A}$, we from the Laplacian $\mathbf{D}=\text{diag}(\mathbf{K}\mathbbm{1})-\mathbf{K}$ to represent our distance metric $\mathbf{D}_{ij} =  D(\mathcal{N}_i, \mathcal{N}_j)$. The fact that $\mathbf{K}$ is symmetric and non-negative ensures that $\mathbf{D}$ is PSD~\cite{von2007tutorial}. This is important to be aligned with Assumption~\ref{as:gconv}. 



\begin{figure}
    \centering
    \includegraphics[width=0.69\linewidth]{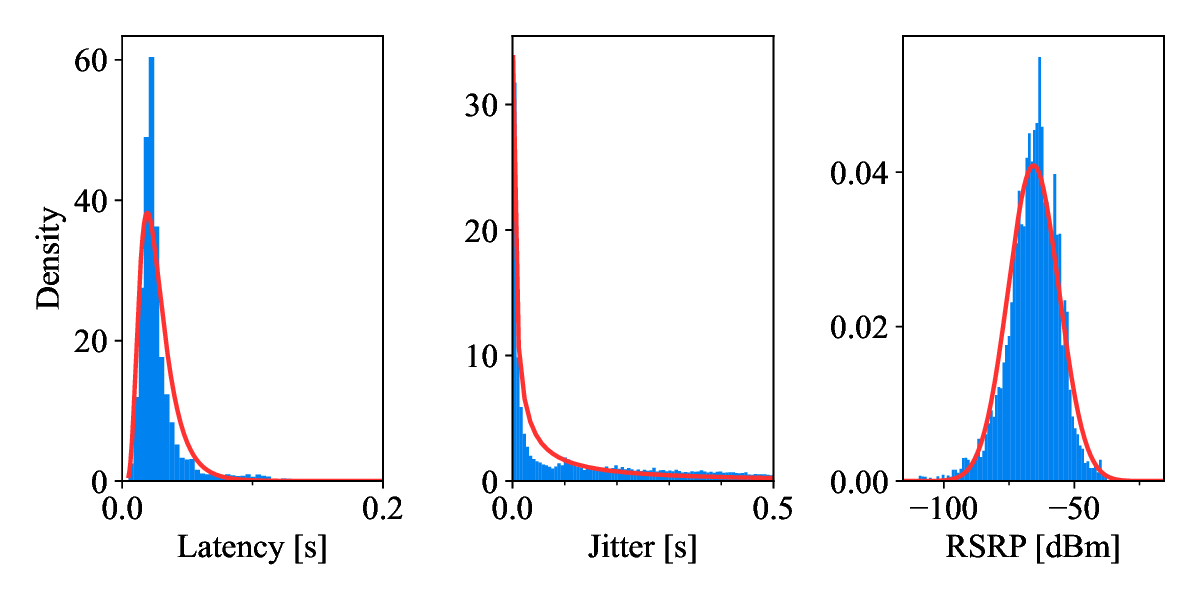}
    \caption{Empirical histograms of latency, jitter, and RSRP together with their fitted probability density functions. Latency and jitter are well approximated by lognormal distributions, while RSRP follows a Gaussian distribution in the logarithmic (dBm) domain, indicating a good agreement between the empirical data and the assumption of lognormal distributions.}
    \label{fig:lognormal}
\end{figure}

Finally, we need to address a practical part of the solution approach. The covariance of a Gaussian distribution is \ac{PSD}. However, even though the output in the training data only contains \ac{PSD} covariance matrices, the output of the MLP does not ensure that the output forms a \ac{PSD} covariance matrix. To ensure the predicted covariance matrix $\boldsymbol{\Sigma}_{(i, h)}$ remains \ac{PSD}, we utilize a Cholesky parameterization~\cite{Cholesky_article} to generate the vector $\mathbf{l}_i\in\mathbb{R}^{d(d+1)/2}$ representing the non-zero elements of a lower-triangular matrix $\mathbf{L}_i$. 
For each element $l_{ij}$ that lies in the diagonal of the matrix $\mathbf{L}_i$, we apply the mapping $l_{ij} = \exp(\tilde{l}_{ij})$, where $\mathbf{\tilde{l}}_{i}$ is the unconstrained vector generated by the predictive model. This ensures that the diagonal values in the matrix $\mathbf{L}_i$ meets the constraint $l_{ij} > 0$, which is a necessary and sufficient condition for the resulting covariance matrix $\boldsymbol{\Sigma}_{(i,h)} = \mathbf{L}_i \mathbf{L}_i^\top$ to be non-singular and strictly positive definite.

Hence, we train a predictive model for each cluster that predicts the parameters \( \mathbf{y}_i (t+h) =  ([\boldsymbol{\mu}_i, \mathbf{\tilde{l}}_{i}]) \in {R}^{d+d(d + 1)/2} \). As described in the previous paragraph, the parameters in vector $\mathbf{\tilde{l}}_{i}$ are used to obtain the covariance matrix $\boldsymbol{\Sigma}_{(i,h)} = \mathbf{L}_i \mathbf{L}_i^\top$ for the future multivariate distribution \(\mathbf{Q}_i (t+h) \sim \mathcal{N}( \boldsymbol{\mu}_{(i,\, h)}, \boldsymbol{\Sigma}_{(i,\, h)}) \).

\subsection{Results}




We assess the proposed clustered \ac{FL} framework and the BCD solver on a dataset generated from the experimental setup in the previous subsection to analyze convergence behavior, cluster evolution during optimization, and prediction performance across cells. To verify the distributional assumptions of the considered QoS metrics, we plot the empirical data in Fig.~\ref{fig:lognormal}. The fitted lognormal distributions show good agreement with the observed data, supporting the use of lognormal models for the QoS metrics.

Initially, we train a global neural network model using the combined training data from all cells, thereby capturing the best achievable performance of a model learned from the entire region. This global predictive neural network is implemented as the predictive model in Algorithm~\ref{alg:BCD}, which we apply transfer learning to, meaning that we only train the last layer of the global neural network. 

The optimization objective in~\eqref{eq:relaxed_opt} produced by the alternating BCD updates decreases steadily and stabilizes with the number of iterations. The convergence trend is presented in Fig.~\ref{fig:opt_loss}, indicating that the alternating cluster reassignments and model updates rapidly capture the available gains. We present two solutions obtained under different smoothness conditions: one that satisfies Assumption~\ref{as:smooth} on $L$-smoothness and one that violates it. The violation is induced by replacing the linear activation function with the non-smooth ReLU activation. As illustrated in Fig.~\ref{fig:opt_loss}, the solution satisfying the smoothness assumption exhibits a smoother behavior, whereas the non-smooth formulation leads to a more irregular and unstable solution.

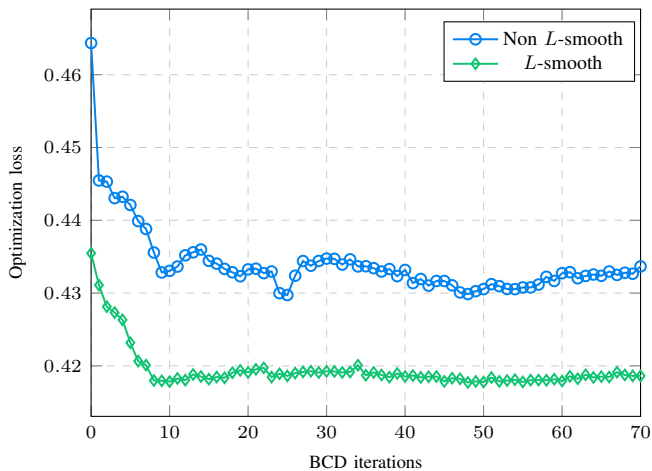
\begin{figure}[t]
\centering
    \begin{tikzpicture} 
    \begin{axis}[
        xlabel={BCD iterations},
        ylabel={Optimization loss},
        label style={font=\scriptsize},
        tick label style={font=\scriptsize} , 
        width=0.49\textwidth,
        height=7cm,
        xmin=0, xmax=70,
        legend style={nodes={scale=0.75, transform shape}, at={(0.28,0.425)}, at={(0.99,0.97)}}, 
        ymajorgrids=true,
        xmajorgrids=true,
        grid style=dashed,
        grid=both,
        grid style={line width=.1pt, draw=gray!15},
        major grid style={line width=.2pt,draw=gray!40},
    ]

    \addplot[
        color=eblue,
        mark=o,
        mark options = {rotate = 180},
        line width=0.75pt,
        mark size=2pt,
        ]
    table[x=Column1,y={opt_loss}]
    {Figures/Data/loss-lambda0_03-eta0_1-beta0_005_r.csv};

    \addplot[
        color=egreen,
        mark=diamond,
        mark options = {rotate = 180},
        line width=0.75pt,
        mark size=2pt,
        ]
    table[x=Column1,y={opt_loss}]
    {Figures/Data/loss-lambda0_03-eta0_1-beta0_005_l.csv};

    \addlegendentry{Non $L$-smooth}
    \addlegendentry{$L$-smooth}
    
    \end{axis}
\end{tikzpicture}
  \caption{The optimization loss $F(\mathbf A,\boldsymbol\Theta)$ of the relaxed optimization problem in~\eqref{eq:relaxed_opt} as a function of the number iterations $k$ in  Algorithm~\ref{alg:BCD}. Two solutions are included, one with Assumption~\ref{as:smooth} of $L$-smoothness intact, and one violating Assumption~\ref{as:smooth}. As shown in this example, the algorithm converges to a stationary point as the number of iterations increases.}
  \label{fig:opt_loss}
\end{figure}


The cluster reassignments initially affect the optimization loss rapidly, as can be seen up until 10 iterations in Fig.~\ref{fig:opt_loss}. Small, rapid movements after the 10th iteration also mark cluster reassignments. As the clusters converge to a stable configuration, so does the predictive error and thus also the optimization loss in~\eqref{eq:relaxed_opt}.


The clustering dynamics affect how the algorithm adapts model granularity to the data. Early iterations generally produce a larger number of clusters, while later iterations merge statistically similar groups and converge to a configuration with fewer clusters. This transition from six initial clusters down to four final clusters of cells is visualized in Fig.~\ref{fig:cluster_constellation} as a stacked bar chart. This reflects the method’s ability to eliminate redundant models while preserving coverage of heterogeneous operating cells.

We compare the predictive accuracy of the clustered scheme against two baselines, a single global model and fully local models trained per cell. The violin plot in Fig.~\ref{fig:violin_global} shows distributions of the predicted error of the expected value of the latency in milliseconds for all 12 cells. The plot demonstrates that the clustered approach yields lower prediction error and reduced variance of the prediction error relative to the global baseline, indicating improved robustness to spatial heterogeneity. The global model represents the pre-trained neural network used to initialize the clustered predictors. The improved performance of the clustered predictors thus motivates the use of transfer learning for capturing cluster-level heterogeneity.

When comparing the same prediction error of the clustered approach to the per-cell local baseline in Fig.~\ref{fig:violin_local}, the two predictive models produce similar error distributions. Although in some cases, in cells 3, 6, 9, and 11, the local model obtains a far higher variance in the predicted error. This suggests that cluster-level predictors capture the local variability while using far fewer distinct models, especially in large-scale deployments.

\begin{figure}[t]
    \centering
    \includegraphics[width=0.53\linewidth, clip=True, trim={0cm 0cm 0cm 0.3cm}]{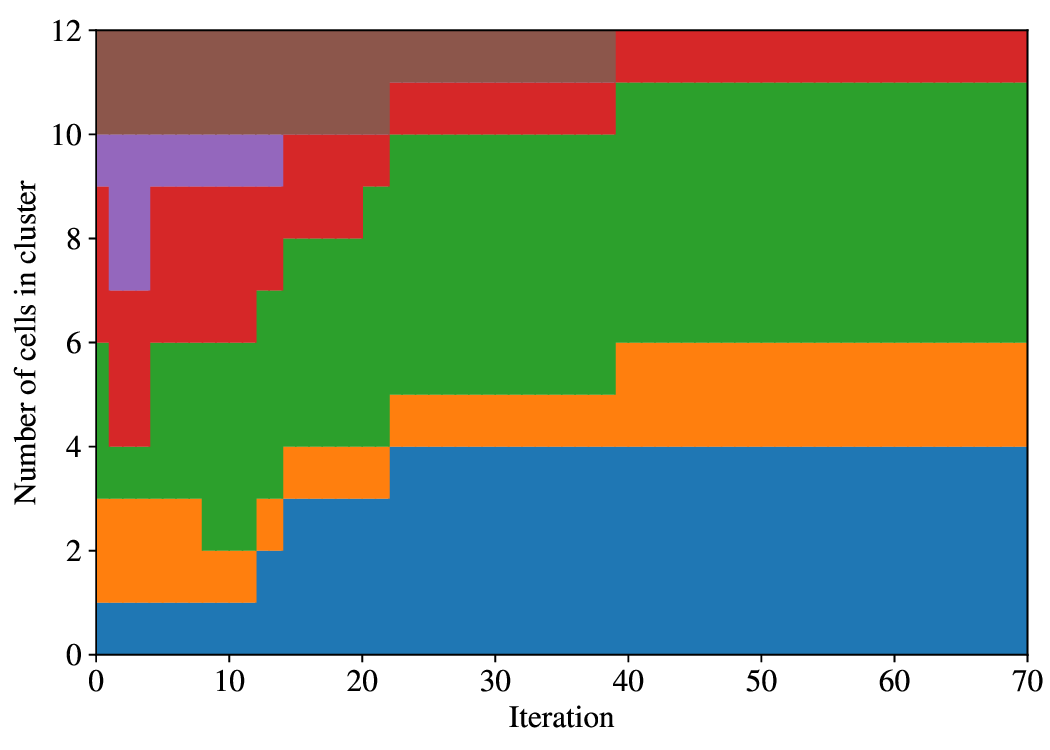}
    \caption{A visualization of how the cluster constellation changes with the number of iterations of the proposed solution framework. The colors represent different clusters. In this case, the algorithm converges towards clusters with 5, 4, 2, and 1 cells, respectively.}
    \label{fig:cluster_constellation}
\end{figure}

\begin{figure}[t]
    \centering

    \subfloat[Global vs. Clustered approaches.\label{fig:violin_global}]{
        \begin{minipage}{0.49\linewidth}
            \centering
            \includegraphics[width=\linewidth]{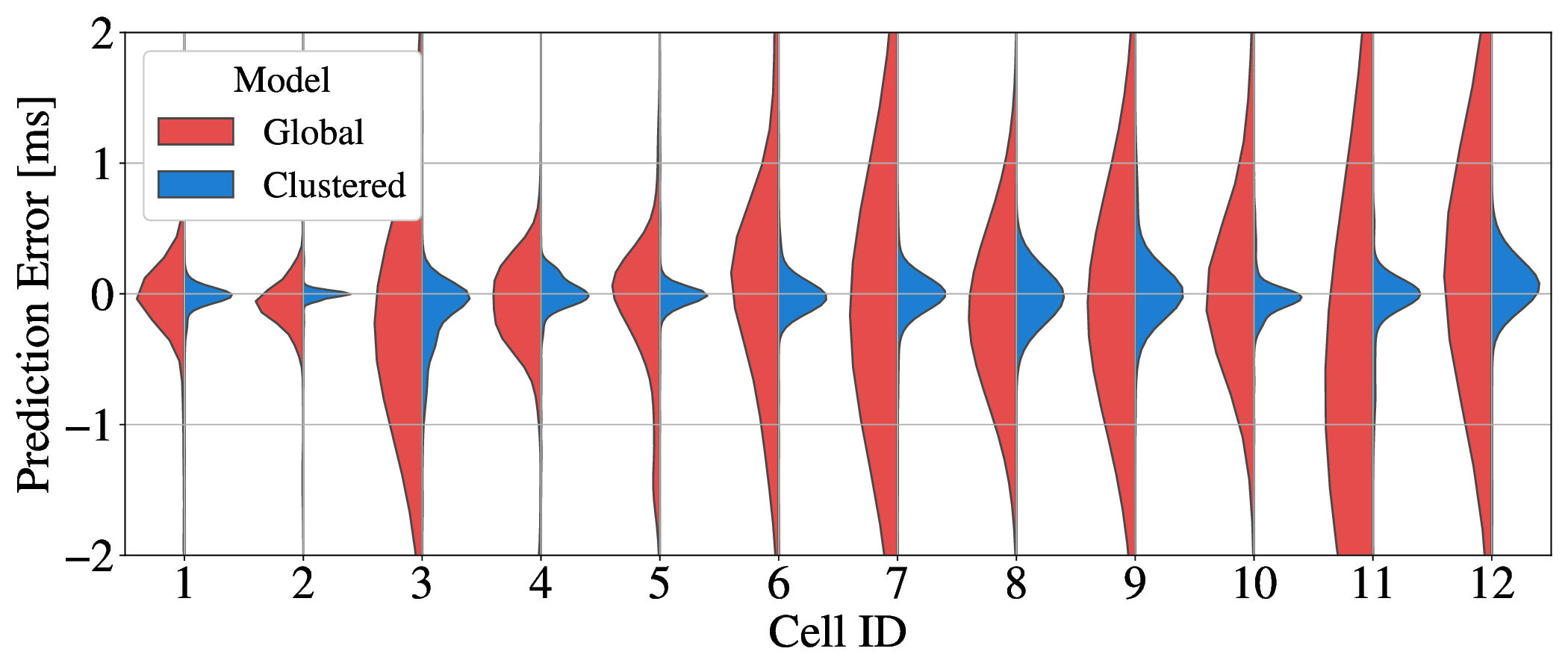}
        \end{minipage}
    }
    \subfloat[Local vs. Clustered approaches. \label{fig:violin_local}]{
        \begin{minipage}{0.49\linewidth}
            \centering
            \includegraphics[width=\linewidth]{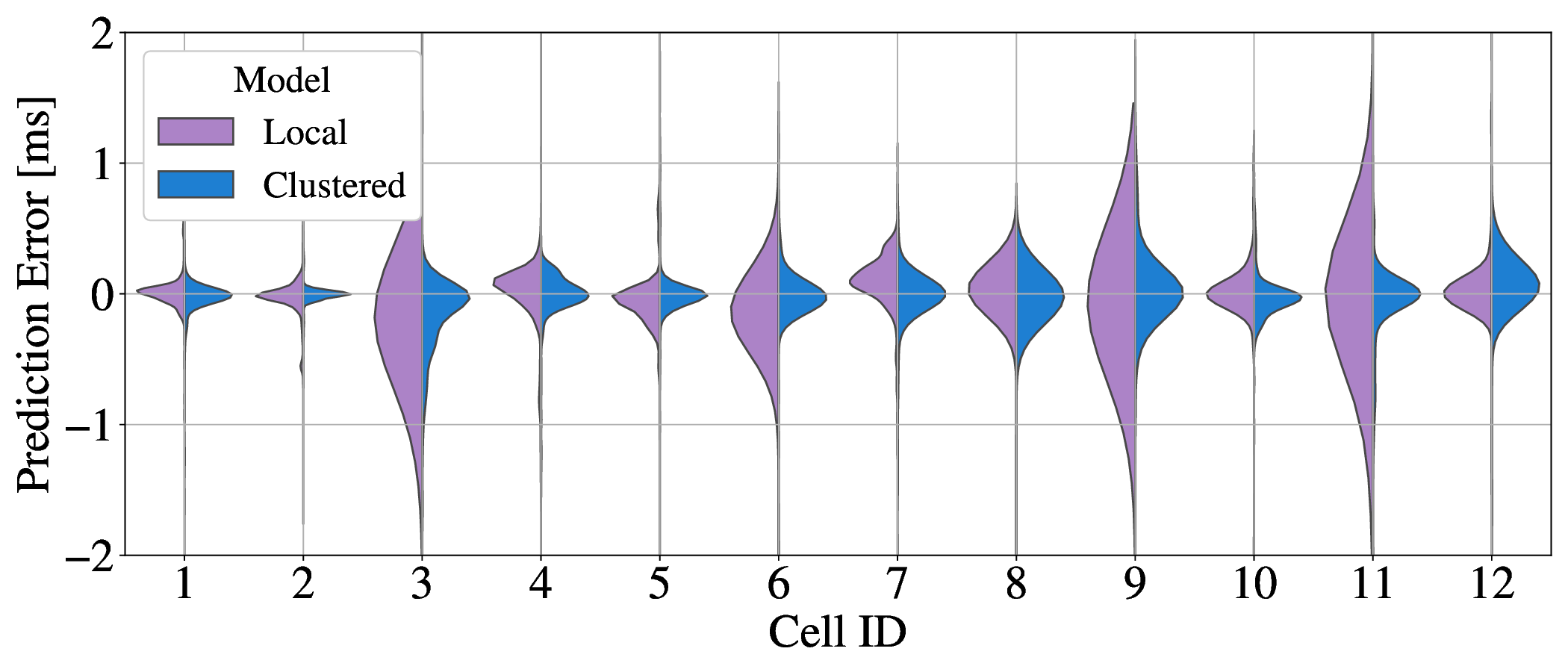}
        \end{minipage}
    }
    \caption{Distribution of prediction errors for the mean latency (ms) across different cells. The plots compare the proposed clustered approach against global and local baselines. Note that the heights of the distributions have been normalized for visibility.}
    \label{fig:violin_combined}
    
\end{figure}

\begin{table}[ht]
\centering
\caption{The mean absolute error (MAE) of predicted Gaussian distribution parameters.}
\label{tab:norm}
\begin{tabular}{llccc}
\hline
\textbf{Parameter} \rule[-1.2ex]{0pt}{0pt}  & \textbf{Method} & \makecell{\textbf{Latency} \\ \textbf{[ms]}} & \makecell{\textbf{Jitter} \\ \textbf{[ms]}} & \makecell{\textbf{RSRP} \\ \textbf{[dBm]}} \\
\hline
\multirow{3}{*}{Mean}
 & Global    & 0.59 & 3.9 & 0.49 \\
 & Local     & 0.16 & \textbf{1.1} & \textbf{0.13} \\
 & Clustered & \textbf{0.12} & 1.2 & 0.16 \\
\hline
\multirow{3}{*}{Std. Dev.}
 & Global    & 0.32 & 3.1 & 0.36 \\
 & Local     & 0.11 & 1.1 & 0.11 \\
 & Clustered & \textbf{0.08} & \textbf{0.97} & \textbf{0.10} \\
\hline
\end{tabular}
\end{table}

A wider evaluation of the predictive accuracy of our proposed framework, compared to the global and local baseline models, is provided in Table~\ref{tab:norm}. This table summarizes the \ac{MAE} of the predicted Gaussian distribution parameters of latency, jitter, and RSRP. Across these KPIs, the clustered models consistently reduce the MAE relative to the global baseline traditionally used in previous work. The local and clustered models achieve comparable accuracy in predicting the mean of the Gaussian distribution across the considered QoS metrics, with differences in MAE below 0.04 ms for latency and 0.03 dB for RSRP. However, the clustered model consistently improves the prediction of the standard deviation, achieving MAE reductions of approximately 27\% for latency (from 0.11 to 0.08), 12\% for jitter (from 1.1 to 0.97), and 9\% for RSRP (from 0.11 to 0.10). This suggests that clustering enables more robust prediction of second-order statistics by leveraging shared structure across cells with similar operating conditions. This aligns with the results in Fig.~\ref{fig:violin_local}. The clustered approach adapts to spatiotemporal traffic load variations across the network induced in the dataset, described in Section~\ref{sec:setup}. As a result, the clustered approach provides more reliable uncertainty characterization than purely local training while maintaining similar mean prediction performance.

\begin{table}[ht]
\centering
\caption{Negative log-likelihood (NLL) comparison across global, local, and clustered methods.}
\label{tab:nll_results}
\begin{tabular}{lccc|c}
\hline
\textbf{Method} & \makecell{\textbf{Latency}} & \makecell{\textbf{Jitter}} & \makecell{\textbf{RSRP}} & \makecell{\textbf{Total NLL}} \\
\hline
Global    & -3.4184 & 0.1755 & 4.6419 & 1.3990 \\
Local     & -3.4265 & 0.1753 & \textbf{4.6415} & 1.3902 \\
Clustered & \textbf{-3.4288} & \textbf{0.1745} & 4.6424 & \textbf{1.3881} \\
\hline
\end{tabular}
\end{table}

Table~\ref{tab:nll_results} presents the \ac{NLL} of the prediction results, demonstrating that the clustered approach achieves the best probabilistic calibration, yielding the lowest total NLL. This further adds to the argument that by grouping behaviorally similar cells, the model effectively improves the global model and resolves the data-heterogeneity issues of the local approach.

While the numerical gains may appear marginal, the NLL is a logarithmic scale metric. Thus, small absolute reductions reflect exponential improvements in predictive likelihood. In large-scale deployments involving thousands of \acp{BS}, these consistent gains in modeling latency, jitter, and other QoS metrics translate to a more reliable network representation.

\subsection{Optimality Gap Analysis}
To assess the tightness of the proposed relaxation, we evaluate the optimality gap between the relaxed and original combinatorial formulations for the problem with $N = 12$. The relaxed problem is solved using the proposed \ac{BCD} algorithm, while the original problem is solved exactly via exhaustive search using dynamic programming. 

The optimality gap is defined as
\begin{equation}
\text{gap} = \frac{f_{\text{orig}}^* - f_{\text{rel}}^*}{f_{\text{orig}}^*} \times 100\%,
\end{equation}
where $f_{\text{rel}}^*$ and $f_{\text{orig}}^*$ denote the optimal objective values of the relaxed and original problems, respectively.

For the configuration with $N=12$, 
the observed gap reaches approximately $23\%$, presented by the $L$-smooth solution in Fig.~\ref{fig:opt_loss}. This indicates that, while the relaxation provides a valid lower bound, it is not tight for these problem instances, and the BCD solver may converge to solutions that deviate from the true combinatorial optimum. Nevertheless, the relaxed formulation remains computationally attractive and is extremely useful when exact optimization is infeasible, particularly for large-scale settings where exhaustive search is impractical. While the gap reached roughly 23\% on small instances, this is expected for NP-hard 
problems and does not preclude the method from offering effective and scalable approximate solutions for large datasets.

Although the solution with an $L$-smooth loss in Fig.~\ref{fig:opt_loss} provides more stable convergence behavior, the solution in Fig.~\ref{fig:opt_loss} with a non-$L$-smooth loss attains a solution closer to the combinatorial optimum with an optimality gap of $21\%$. This observation does not contradict the theoretical guarantees, which pertain to convergence stability rather than approximation tightness as presented in Fig.~\ref{fig:opt_loss}.


\section{Discussion}
\label{sec:discusion}

The results demonstrate that the joint clustering and training BCD-based framework produces stable cluster assignments and distributional forecasts that substantially improve the accuracy of a global model, while improving the predictive accuracy of the second-order statistics of per-cell models. The clustered approach is resource-efficient as it requires fewer predictive models and enables cells to migrate to a model that better reflects their current wireless conditions, instead of retraining new models. The framework achieves a favorable trade-off between predictive accuracy and model complexity. Ultimately, this makes it well-suited for SLA-driven networks in dynamic environments, where accurate long-horizon QoS forecasts are essential for making informed network management decisions and meeting reliability commitments. 

The results can be considered statistically representative because the dataset used for evaluation includes the main sources of variability in a realistic vehicular network scenario. The combined ns-3 and Sionna simulation setup introduces ray-traced propagation, time-varying traffic load over 24 hours, and heterogeneous mobility patterns, ensuring diverse operating conditions. Moreover, the 1-second sampling interval provides a large number of observations per cell, leading to stable error distributions. The consistency between the multiple metrics evaluated further indicates that the observed improvements are systematic rather than due to sampling noise. The alignment of these independent indicators provides strong evidence that the results are statistically robust.

Future work should consider scaling the framework to larger cellular deployments and incorporating additional KPI dimensions, such as uplink performance. 
Such a study would have the possibility to investigate the positive effects of using our framework to let cells migrate to different cells between training rounds. 
Finally, integrating the framework into an online network with SLA management is an important step toward deployment in commercial networks.

\section{Conclusion}
\label{sec:conclusion}
We presented a joint clustering and training framework based on \ac{BCD} that learns cluster-specific predictive QoS models while jointly optimizing cell assignments to the clusters. Evaluated on a realistic dataset, generated by a simulator combining ns-3 and Sionna, the method consistently improves prediction accuracy over a single global model and achieves performance comparable to per-cell models while using far fewer stored predictors. The approach offers a practical trade-off between predictive accuracy and model complexity, which is attractive for SLA-driven deployments where resource-efficient forecasts are required. Future work should investigate online adaptation of the proposed framework in networks with SLA management systems.

\appendix
\label{sec:appendix}

\subsection{Proof of Proposition~\ref{prop1}}
\label{sec:proof_prop1} 
The function \( g(\mathbf{A}) \) does not take the value \( +\infty \) anywhere in its domain. Furthermore, since the function $g(\mathbf{A})$ can be written in quadratic form, it is finite for all finite \( \mathbf{A} \). If the distance matrix \( \mathbf{D} \) is \ac{PSD}, it is ensured that the quadratic form is well-defined and nonnegative. For these reasons, the function \( g(\mathbf{A}) \) is proper. 

If $g(\mathbf{A})$ is convex, it is lower semicontinuous. For that reason, we prove that $g(\mathbf{A})$ is convex.

\begin{proof}
We rewrite \( g(\mathbf{A}) \) on quadratic form:

\begin{equation}
    g(\mathbf{A}) = 
    \lambda \sum_{k=1}^{C} \boldsymbol{a}_c^T \mathbf{D} \boldsymbol{a}_c.
\end{equation}
We proceed by computing the gradient with respect to \( a_{ic} \) as

\begin{equation}
    \frac{\partial g}{\partial a_{ic}} = 2\lambda \sum_{j=1}^{N} a_{jc} D_{ij}.
\end{equation}
Then, we compute the Hessian by with respect to \( a_{i'c'} \):
\begin{equation}
    \frac{\partial^2 g}{\partial a_{ic} \partial a_{i'c'}} = 2\lambda D_{ii'} \delta_{cc'}.
\end{equation}
Here, \( D_{ii'} \) represents the distance between \( \mathcal{N}_i \) and \( \mathcal{N}_{i'} \), and \( \delta_{cc'} \) is the Kronecker delta, which equals 1 if \( c = c' \) and 0 otherwise.

The Hessian matrix \(\mathbf{H} = \nabla^2 g(\mathbf{A}) \) consists of blocks corresponding to different clusters \( c \). Each block \( \mathbf{H}_c \) corresponds to a separate submatrix in the Hessian. The off-diagonal blocks 
are zero because of the Kronecker delta \( \delta_{cc'} \), meaning there is no interaction between different clusters.
Thus, the Hessian is block diagonal:
\[
\nabla^2 g(\mathbf{A}) = \mathrm{diag}(\mathbf{H}_1, \ldots, \mathbf{H}_C), \quad \mathbf{H}_c = 2\lambda \mathbf{D}.
\]
Since we have already assumed that the distance matrix \( \mathbf{D} \) is positive semidefinite, each Hessian block \( \mathbf{H}_c \) is \ac{PSD}. A block diagonal matrix is \ac{PSD} if all its diagonal blocks are \ac{PSD}. This implies that the entire Hessian is positive semidefinite, which guarantees that \( g(\mathbf{A}) \) is convex.
\end{proof}

\subsection{Proof That Assumption~\ref{as:smooth} Holds}
\begin{lem}
The function $f(\mathbf{A}, \boldsymbol{\Theta})$ in~\eqref{eq:f}, with smooth activation functions in the neural network function \(\boldsymbol{\omega}(\mathbf{x}_i; \boldsymbol{\theta}_c)\) and smooth loss function \( \ell \), is L-smooth with a constant \( L > 0 \), satisfying:
\begin{equation}
    \|\nabla f(\mathbf{A}, \boldsymbol{\Theta}) - \nabla f(\mathbf{A}', \boldsymbol{\Theta}')\| \leq L \|(\mathbf{A}, \boldsymbol{\Theta}) - (\mathbf{A}', \boldsymbol{\Theta}')\|.
\end{equation}
\end{lem}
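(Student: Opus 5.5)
The plan is to show that $f$ is twice continuously differentiable and that its Hessian is bounded on every bounded set $\mathcal B$. By the mean value theorem, this yields the local Lipschitz gradient property required in Assumption~\ref{as:smooth}. A single global constant $L$ can only be expected once we restrict to a bounded region. Here $\mathbf A$ is automatically bounded because its rows lie in the probability simplex, and $\boldsymbol\Theta$ stays in a bounded sublevel set by Assumption~\ref{as:boundedness} together with the descent property of the block updates. So I will prove the bound on an arbitrary bounded set $\mathcal B$ and take $L=L_{\mathcal B}$ for the set containing the iterates.

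First, I decompose $f(\mathbf A,\boldsymbol\Theta)=\sum_c \boldsymbol a_c^\top \boldsymbol\ell_c(\boldsymbol\theta_c)+g(\mathbf A)$, as in \eqref{eq:quadratic}. The term $g$ is quadratic, with constant Hessian $2\lambda\,\mathrm{diag}(\mathbf D,\ldots,\mathbf D)$ as computed in the proof of Proposition~\ref{prop1}. Its gradient is therefore globally Lipschitz with constant $2\lambda\|\mathbf D\|_2$.

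Second, I treat the coupling term. Only the last layer $\boldsymbol\theta_c$ is trainable, and $\boldsymbol\theta_{base}$ is fixed. Hence $\boldsymbol\omega(\mathbf x_i;\boldsymbol\theta_c)$ is a composition of the fixed feature map $\phi(\mathbf x_i)$ with an affine map in $\boldsymbol\theta_c$, followed by a smooth output activation (or the identity). Composing with the smooth loss $\ell$ (MSE) makes $h_{ic}(\boldsymbol\theta_c):=\ell(\boldsymbol\omega(\mathbf x_i;\boldsymbol\theta_c),\mathbf y_i)$ a $C^2$ function. On a bounded set, $h_{ic}$, $\nabla h_{ic}$ and $\nabla^2 h_{ic}$ are bounded by constants $B_0,B_1,B_2$. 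These are finite by continuity on a compact set, and the finitely many data points $\mathbf x_i$ are fixed.

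Third, I assemble the Hessian of $a_{ic}h_{ic}(\boldsymbol\theta_c)$ in the joint variables $(a_{ic},\boldsymbol\theta_c)$. The $aa$ entry is zero. The cross entry $a\theta$ is $\nabla h_{ic}$, bounded by $B_1$. The $\theta\theta$ block is $a_{ic}\nabla^2 h_{ic}$, bounded by $B_2$ because $0\le a_{ic}\le 1$ (and by $\sup|a|\,B_2$ on a general bounded set). Summing over the $NC$ terms, which involve disjoint coordinates except for the shared $\boldsymbol\theta_c$, and adding the $g$ part gives $\|\nabla^2 f\|_2\le L:=2\lambda\|\mathbf D\|_2+N(B_1+B_2)$ up to absolute constants, for instance using the Frobenius norm. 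Finally, I apply the integral form $\nabla f(z)-\nabla f(z')=\int_0^1\nabla^2 f(z'+s(z-z'))(z-z')\,ds$ on the convex hull of $\mathcal B$, which is still bounded, to obtain the claimed inequality.

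The main obstacle is that the product $a_{ic}h_{ic}(\boldsymbol\theta_c)$ is not globally $L$-smooth: the cross term $\nabla h_{ic}$ grows, for MSE, linearly in $\boldsymbol\theta_c$. The argument therefore depends on restricting to bounded sets, and I will state this restriction explicitly, consistent with the ``for every bounded set'' phrasing of Assumption~\ref{as:smooth}. The smoothness of the activation is used exactly to guarantee that $B_2$ exists. This explains why a ReLU base network, whose feature map is nonsmooth, must be excluded unless only the fixed base uses it.
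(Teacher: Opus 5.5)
Your proof is correct, but it is organized differently from the paper's.

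\textbf{How the two proofs differ.} The paper argues at first order and treats the two gradient blocks separately. It controls the variation of $\nabla_{\mathbf A}f$ by a Lipschitz constant $L_\ell$ of $\ell$ in $\boldsymbol\theta_c$ together with $2\lambda\|\mathbf D\|$. It controls the variation of $\nabla_{\boldsymbol\theta_c}f=\sum_i a_{ic}h_i(\boldsymbol\theta_c)J_i(\boldsymbol\theta_c)$ by adding and subtracting $a'_{ic}h_i(\boldsymbol\theta_c)J_i(\boldsymbol\theta_c)$, using a bound $M_J$ on $\|h_iJ_i\|$ and Lipschitz constants $L_h,L_J$ justified by ``bounded activations and derivatives on compact domains.'' It treats $\boldsymbol\omega$ as a general network and never assembles a single $L$. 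You instead bound the whole Hessian on the convex hull of a bounded set and integrate along segments.

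\textbf{Shared ingredients.} The paper's $h_iJ_i$ is your $\nabla h_{ic}$. The paper controls it twice (as $L_\ell$ and as $M_J$); you control it once, as the off-diagonal block $B_1$. Likewise, $L_hM_J+M_gL_J$ plays the role of your $B_2$.

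\textbf{What each route buys.} Your packaging handles all four blocks, including both cross blocks, by one mechanism. It produces an explicit constant, uses the last-layer-only structure, and makes explicit that the constant depends on the bounded set. That last point matters. As you note, for MSE no single global $L$ exists, because $\nabla h_{ic}$ grows linearly in $\boldsymbol\theta_c$. So the lemma can only hold in the per-bounded-set form of Assumption~\ref{as:smooth}, which the paper's proof uses only tacitly. The paper's first-order route needs only Lipschitz continuity of $h_i$ and $J_i$ rather than $C^2$, a marginal gain in generality.

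\textbf{Two side remarks to correct.} Neither affects the lemma itself.
First, Assumption~\ref{as:boundedness} together with the descent property does not place $\boldsymbol\Theta$ in a bounded sublevel set, because lower boundedness is not coercivity. For instance, if the fixed last-layer features do not span their space, $F$ is constant along the complementary directions of $\boldsymbol\theta_c$. Choosing $L_{\mathcal B}$ for the iterates therefore needs a separate bounded-iterates assumption.
Second, since $\boldsymbol\theta_{base}$ and the samples $\mathbf x_i$ are fixed, the base features are constant vectors. Smoothness of the base activations is therefore irrelevant; only the output activation and $\ell$ decide whether $B_2$ exists.
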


\begin{proof}
The function \( f(\mathbf{A}, \boldsymbol{\Theta}) \) is a weighted sum of loss functions \( \ell(\boldsymbol{\omega}(\mathbf{x}_i; \boldsymbol{\theta}_c), \mathbf{y}_i(t+h)) \), multiplied by weights \( a_{ic} \), which are fixed nonnegative weights that sum to 1.
The differentiability of \( f(\mathbf{A}, \boldsymbol{\Theta}) \) thus depends on the smoothness properties of the weights $\boldsymbol{a}$, the loss function \( \ell \) and the neural network function \( \boldsymbol{\omega}(\mathbf{x}_i; \boldsymbol{\theta}_c)\).

First, we evaluate the gradient $\nabla_{\mathbf{A}}$ of the quadratic form of \(f(\mathbf{A}, \boldsymbol{\Theta})\) from~\eqref{eq:quadratic}.
We calculate the gradient for each $\nabla_{\boldsymbol{a}_{c}}$ and find that:
\begin{equation}
    \nabla_{\boldsymbol{a}_{c}} f(\mathbf{A}, \boldsymbol{\Theta}) = \sum_{c=1}^{C} \left(\boldsymbol{\ell}_c + 2\lambda\, \mathbf{D}\, \boldsymbol{a}_{c}\right).
    \label{eq:grad_diff_w}
\end{equation}
Since \( \ell \) is smooth, there exists \( L_\ell \) such that:
\begin{equation}\label{eq:grad_diff_ell}
    |\ell(\boldsymbol{\omega}(\mathbf{x}_i; \boldsymbol{\theta}_c), \mathbf{y}_i) - \ell(\boldsymbol{\omega}(\mathbf{x}_i; \boldsymbol{\theta}_c'), \mathbf{y}_i)| \leq L_\ell \|\boldsymbol{\theta}_c - \boldsymbol{\theta}_c'\|.
\end{equation}
To continue, we check the second term. By factorizing the second term as
\begin{equation}
\| 2\lambda \mathbf{D} (\boldsymbol{a}_c - \boldsymbol{a}_c') \| \leq 2\lambda \|\mathbf{D}\| \|\boldsymbol{a}_c - \boldsymbol{a}_c'\|=L_\mathbf{D}\|\boldsymbol{a}_c - \boldsymbol{a}_c'\| .
\label{eq:grad_diff_D}
\end{equation}
Thus, we find that the function $f(\mathbf{A},\boldsymbol{\Theta})$ is L-smooth with respect to the gradient $\nabla_{\mathbf{A}}$.

Next, we focus on the difference for the gradient $\nabla_{\boldsymbol{\Theta}}$ of  \(f(\mathbf{A}, \boldsymbol{\Theta})\).
Since \( f(\mathbf{A}, \boldsymbol{\Theta}) \) is separable across clusters \( c \), we compute:

\begin{equation}
    \nabla_{\boldsymbol{\theta}_c} f(\mathbf{A}, \boldsymbol{\Theta}) = \sum_{i=1}^{N} a_{ic} \nabla_{\boldsymbol{\theta}_c} \ell\Big(\boldsymbol{\omega}(\mathbf{x}_i; \boldsymbol{\theta}_c), \mathbf{y}_i(t+h)\Big).
\end{equation}

By applying the chain rule:

\begin{equation} \label{eq:chain_rule}
    \nabla_{\boldsymbol{\theta}_c} \ell(\boldsymbol{\omega}(\mathbf{x}_i; \boldsymbol{\theta}_c), \mathbf{y}_i) = \frac{\partial \ell}{\partial a} \Big|_{a = \boldsymbol{\omega}(\mathbf{x}_i; \boldsymbol{\theta}_c)} \cdot \nabla_{\boldsymbol{\theta}_c} \boldsymbol{\omega}(\mathbf{x}_i; \boldsymbol{\theta}_c).
\end{equation}

We define:
\begin{itemize}
    \item \( h_i(\boldsymbol{\theta}_c) = \frac{\partial \ell}{\partial a} \Big|_{a = \boldsymbol{\omega}(\mathbf{x}_i; \boldsymbol{\theta}_c)} \),
    \item \( J_i(\boldsymbol{\theta}_c) = \nabla_{\boldsymbol{\theta}_c} \boldsymbol{\omega}(\mathbf{x}_i; \boldsymbol{\theta}_c) \). 
\end{itemize}
This gives us:
\begin{equation}
    \nabla_{\boldsymbol{\theta}_c} f(\mathbf{A}, \boldsymbol{\Theta}) = \sum_{i=1}^{N} a_{ic} h_i(\boldsymbol{\theta}_c) J_i(\boldsymbol{\theta}_c).
    \label{f_grad_theta}
\end{equation}

First off, to prove smoothness, we show that the difference for 
\begin{equation}
    \|\nabla_{\boldsymbol{\theta}_c} f(\mathbf{A}, \boldsymbol{\Theta}) - \nabla_{\boldsymbol{\theta}_c} f(\mathbf{A}', \boldsymbol{\Theta}')\|,
    \label{eq:f_grad_theta_diff}
\end{equation}
is bounded. By inserting the gradient expression in~\eqref{f_grad_theta} into~\eqref{eq:f_grad_theta_diff}, we get
\begin{equation}
    \left\| \sum_{i=1}^{N} a_{ic} h_i(\boldsymbol{\theta}_c) J_i(\boldsymbol{\theta}_c) - a_{ic}' h_i(\boldsymbol{\theta}_c') J_i(\boldsymbol{\theta}_c') \right\|.
\end{equation}
By adding and subtracting \( a_{ic}' h_i(\boldsymbol{\theta}_c) J_i(\boldsymbol{\theta}_c) \), we get:
\begin{equation} \label{eq:grad_diff}
\begin{aligned}
    & \|\nabla_{\boldsymbol{\theta}_c} f(\mathbf{A}, \boldsymbol{\Theta}) - \nabla_{\boldsymbol{\theta}_c} f(\mathbf{A}', \boldsymbol{\Theta}')\|  \\
    &\quad \leq \sum_{i=1}^{N} \Big[ |a_{ic} - a_{ic}'| \cdot \| h_i(\boldsymbol{\theta}_c) J_i(\boldsymbol{\theta}_c) \|  \\
    &\quad \quad + a_{ic}' \| h_i(\boldsymbol{\theta}_c) J_i(\boldsymbol{\theta}_c) - h_i(\boldsymbol{\theta}_c') J_i(\boldsymbol{\theta}_c') \| \Big].
\end{aligned}
\end{equation}

We will show that the two terms in the sum are bounded separately. 
Because the neural networks have bounded activations and derivatives on compact domains, \( \| h_i(\boldsymbol{\theta}_c) J_i(\boldsymbol{\theta}_c) \| \) is bounded. This implies that there exists a constant \( M_J \) such that $\| h_i(\boldsymbol{\theta}_c) J_i(\boldsymbol{\theta}_c) \| \leq M_J$.
Thus, we get:
\begin{equation}
    |a_{ic} - a_{ic}'| \cdot \| h_i(\boldsymbol{\theta}_c) J_i(\boldsymbol{\theta}_c) \| \leq M_J |a_{ic} - a_{ic}'|.
\end{equation}

Next, we address the second term in the sum in~\eqref{eq:grad_diff}.
Since both \( h_i(\boldsymbol{\theta}_c) \) and \( J_i(\boldsymbol{\theta}_c) \) are smooth, there exist constants \( L_h \) and \( L_J \) such that:
\begin{equation}
    \| h_i(\boldsymbol{\theta}_c) - h_i(\boldsymbol{\theta}_c') \| \leq L_h \|\boldsymbol{\theta}_c - \boldsymbol{\theta}_c'\|,
\end{equation}
\begin{equation}
    \| J_i(\boldsymbol{\theta}_c) - J_i(\boldsymbol{\theta}_c') \| \leq L_J \|\boldsymbol{\theta}_c - \boldsymbol{\theta}_c'\|.
\end{equation}
Using the triangle inequality:
\begin{equation}
\begin{aligned}
    \| h_i(\boldsymbol{\theta}_c) J_i(\boldsymbol{\theta}_c) & - h_i(\boldsymbol{\theta}_c') J_i(\boldsymbol{\theta}_c') \| \\ & \leq | h_i(\boldsymbol{\theta}_c) - h_i(\boldsymbol{\theta}_c') | \cdot \| J_i(\boldsymbol{\theta}_c) \|  \\
    & + | h_i(\boldsymbol{\theta}_c') | \cdot \| J_i(\boldsymbol{\theta}_c) - J_i(\boldsymbol{\theta}_c') \|.
\end{aligned}
\end{equation}

Since \( \| J_i(\boldsymbol{\theta}_c) \| \leq M_J \) and \( | h_i(\boldsymbol{\theta}_c') | \leq M_g \), we obtain:

\begin{equation}
    \| h_i(\boldsymbol{\theta}_c) J_i(\boldsymbol{\theta}_c) - h_i(\boldsymbol{\theta}_c') J_i(\boldsymbol{\theta}_c') \| \leq (L_h M_J + M_g L_J) \|\boldsymbol{\theta}_c - \boldsymbol{\theta}_c'\|.
\end{equation}

Substituting the bounds into~\eqref{eq:grad_diff}, we get:
\begin{equation} \label{eq:grad_diff1}
\begin{aligned}
    & \|\nabla_{\boldsymbol{\theta}_c} f(\mathbf{A}, \boldsymbol{\Theta}) - \nabla_{\boldsymbol{\theta}_c} f(\mathbf{A}', \boldsymbol{\Theta}')\|  \\
    &\quad \leq \sum_{i=1}^{N} \Big[|a_{ic} - a_{ic}'| \cdot \| h_i(\boldsymbol{\theta}_c) J_i(\boldsymbol{\theta}_c) \|  \\
    &\quad \quad + a_{ic}' \| h_i(\boldsymbol{\theta}_c) J_i(\boldsymbol{\theta}_c) - h_i(\boldsymbol{\theta}_c') J_i(\boldsymbol{\theta}_c') \| \Big]  \\
    & \quad \leq \sum_{i=1}^{N} \Big[ M_J |a_{ic} - a_{ic}'| \\ 
    &  \quad  \quad + a_{ic}' (L_h M_J + M_g L_J) \|\boldsymbol{\theta}_c - \boldsymbol{\theta}_c'\| \Big].
\end{aligned}
\end{equation}  

The results in~\eqref{eq:grad_diff_ell},~\eqref{eq:grad_diff_D} and~\eqref{eq:grad_diff1} ensure that all requirements for~\eqref{eq:a1} in Assumption~\ref{as:smooth} are fulfilled. Thus, $f(\mathbf{A}, \boldsymbol{\Theta})$ is $L$-smooth.
\end{proof}

\subsection{Proof of Proposition~\ref{prop:bcd-convergence}}
We consider optimization problem~\eqref{eq:relaxed_opt}, with \(f\) and \(r_\mathbf{A}\) from~\eqref{eq:f}-\eqref{eq:r}, and with the block variables
\(\mathbf A\) 
and \(\boldsymbol\Theta\). 
We assume the Assumptions~\ref{as:smooth}--\ref{as:boundedness} and run updates for both blocks according to the framework in Section~\ref{sec:bcd}. We now present lemmas that lead towards the main convergence theorem.

\begin{lem}(\hspace{-5pt}~\cite[Equation $(2.8)$]{xu2013block})
\label{lem:one-step-decrease}
For each \(k\) the proximal \(\mathbf \Theta\)-update~\eqref{eq:block1} yields
\begin{equation}\label{eq:Theta-decrease}
F(\mathbf A^{k},\boldsymbol\Theta^{k})
- F(\mathbf A^{k},\boldsymbol\Theta^{k+1})
\;\ge\; \frac{L_{\boldsymbol \Theta}^k}{2}\,\|\boldsymbol\Theta^{k+1}-\boldsymbol\Theta^{k}\|_F^2.
\end{equation}

Similarly, the \(\boldsymbol A\)-update~\eqref{eq:block2} gives
\begin{equation}\label{eq:A-decrease}
F(\mathbf A^{k},\boldsymbol\Theta^{k+1})
- F(\mathbf A^{k+1},\boldsymbol\Theta^{k+1})
\;\ge\; \frac{L_{\mathbf A}^k}{2}\,\|\mathbf A^{k+1}-\mathbf A^{k}\|_F^2.
\end{equation}



Hence, each single-block step decreases the objective by at least a quadratic
term proportional to the step norm squared.
\end{lem}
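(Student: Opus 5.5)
The plan is to derive both inequalities in Lemma~\ref{lem:one-step-decrease} from the same template: optimality of the proximal subproblem, compared against the previous iterate as a feasible competitor. This is the argument behind~\cite[Equation $(2.8)$]{xu2013block}. Throughout, the block not being updated is held fixed. The proximal weights are taken as $\alpha_{\boldsymbol\Theta}=L^k_{\boldsymbol\Theta}/2$ and $\alpha_{\mathbf A}=L^k_{\mathbf A}/2$, as in Section~\ref{sec:bcd}. Assumption~\ref{as:block-updates} keeps these weights bounded away from $0$ and $\infty$, and Assumption~\ref{as:boundedness} guarantees that all objective values are finite, so differences of $F$ are meaningful.

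\textbf{$\boldsymbol\Theta$-step.} Here $\mathbf A^k$ is fixed. The term $\beta\|\mathbf A^k\|_*$ and the clustering term $g(\mathbf A^k)$ are constants. Hence $F(\mathbf A^k,\cdot)$ and the function $f(\mathbf A^k,\cdot)$ from~\eqref{eq:f} differ only by a constant.

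Suppose $\boldsymbol\Theta^{k+1}$ is an exact minimizer of~\eqref{eq:block1}. Comparing its objective value with the candidate $\boldsymbol\Theta=\boldsymbol\Theta^k$ gives
\[
f(\mathbf A^k,\boldsymbol\Theta^{k+1})+\tfrac{L^k_{\boldsymbol\Theta}}{2}\|\boldsymbol\Theta^{k+1}-\boldsymbol\Theta^k\|_F^2\le f(\mathbf A^k,\boldsymbol\Theta^k).
\]
Adding the constant terms back yields~\eqref{eq:Theta-decrease}.

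The algorithm, however, implements this step as the gradient step~\eqref{eq:nonfl}, i.e.\ a prox-linear update. For that version I would use the descent lemma from Assumption~\ref{as:smooth}, with local constant $L_{\mathcal B}$ on a bounded set containing the iterates. Substituting $\boldsymbol\Theta^{k+1}-\boldsymbol\Theta^k=-\nabla_{\boldsymbol\Theta}f/L^k_{\boldsymbol\Theta}$ into the descent lemma gives
\[
f(\mathbf A^k,\boldsymbol\Theta^k)-f(\mathbf A^k,\boldsymbol\Theta^{k+1})\ge \Bigl(L^k_{\boldsymbol\Theta}-\tfrac{L_{\mathcal B}}{2}\Bigr)\|\boldsymbol\Theta^{k+1}-\boldsymbol\Theta^k\|_F^2 .
\]
This is at least $\tfrac{L^k_{\boldsymbol\Theta}}{2}\|\cdot\|^2$ provided $L^k_{\boldsymbol\Theta}\ge L_{\mathcal B}$. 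I would state this as the step-size choice $\underline L_{\boldsymbol\Theta}\ge L_{\mathcal B}$.

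\textbf{$\mathbf A$-step.} Now $\boldsymbol\Theta^{k+1}$ is fixed, and the objective of~\eqref{eq:block2} is $f(\cdot,\boldsymbol\Theta^{k+1})+r_{\mathbf A}+\tfrac{L^k_{\mathbf A}}{2}\|\cdot-\mathbf A^k\|_F^2$, restricted to the row-simplex constraint set. This set is convex and closed, and it contains $\mathbf A^k$, so $\mathbf A^k$ is a valid competitor. The same comparison argument applies once $\mathbf A^{k+1}$ is the minimizer.

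I would justify that the minimizer exists and is unique as follows. By Proposition~\ref{prop1}, $g$ is convex; the linear loss term and $\beta\|\cdot\|_*$ are convex; and the proximal term is strongly convex. Comparing with $\mathbf A=\mathbf A^k$ then gives~\eqref{eq:A-decrease} directly.

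\textbf{Main obstacle.} The hardest step is that Algorithm~\ref{alg:BCD} does not compute the exact minimizer of~\eqref{eq:block2}. It applies a gradient step, then SVT, then Dykstra's projection onto the simplex, and the composition of SVT with projection is not the proximal map of $r_{\mathbf A}+\iota_{\text{simplex}}$.

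My first remedy is to interpret the $\mathbf A$-update as an exact proximal step on the convex subproblem, and to regard SVT followed by Dykstra as an implementation of that step. Dykstra's algorithm applied to the intersection makes this interpretation reasonable. Under that reading the descent follows as above.

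As a fallback, I would treat the update as a prox-linear step with a convex prox term and use the descent lemma for $g$, whose gradient is Lipschitz with constant $2\lambda\|\mathbf D\|$. This gives the decrease $\tfrac{L^k_{\mathbf A}}{2}\|\cdot\|^2$ provided $\underline L_{\mathbf A}\ge 2\lambda\|\mathbf D\|$.

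Summing the two one-step decreases then shows that the sequence $F(\mathbf A^k,\boldsymbol\Theta^k)$ is monotonically nonincreasing.
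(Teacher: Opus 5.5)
Your core argument (evaluate each proximal subproblem at its minimizer and at the previous iterate, with the frozen block contributing only a constant) is exactly the paper's proof, which writes out the $\boldsymbol\Theta$-case and asserts that the $\mathbf A$-case follows in the same way; for the exact updates~\eqref{eq:block1}--\eqref{eq:block2} your argument is complete. Your extra discussion of the implemented steps goes beyond the paper and is sound: the conditions $\underline L_{\boldsymbol\Theta}\ge L_{\mathcal B}$ and $\underline L_{\mathbf A}\ge 2\lambda\|\mathbf D\|$ do give the stated constant, the latter via strong convexity of the prox-linear subproblem. The one caveat is that both of your $\mathbf A$-remedies still treat SVT followed by Dykstra's projection as the proximal map of $r_{\mathbf A}$ plus the simplex indicator, which it is not, and the paper leaves this mismatch unaddressed as well.
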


\begin{proof}
We give the standard argument for proximal updates. 
By optimality of \(\boldsymbol{\Theta}^{k+1}\) in~\eqref{eq:block1} we have for any \(\boldsymbol{\Theta}\)
\begin{align*}
F(\mathbf A^{k},\boldsymbol\Theta^{k+1}) 
+ \frac{L_{\boldsymbol{\Theta}}^k}{2}\|\boldsymbol{\Theta}^{k+1}-\boldsymbol{\Theta}^{k}\|_F^2
&\le \\
F(\mathbf A^k,\boldsymbol\Theta) 
+ \frac{L_{\boldsymbol{\Theta}}^k}{2}\|\boldsymbol{\Theta}-\boldsymbol{\Theta}^{k}\|_F^2&.
\end{align*}
Take \(\boldsymbol\Theta=\boldsymbol\Theta^{k}\) and rearrange:
\begin{align*}
F(\mathbf A^{k},\boldsymbol\Theta^{k}) 
- F(\mathbf A^{k},\boldsymbol\Theta^{k+1})
\ge  \frac{L_{\boldsymbol{\Theta}}^k}{2}\|\boldsymbol{\Theta}^{k+1}-\boldsymbol{\Theta}^{k}\|_F^2&.
\end{align*}
Since \(F(\mathbf A,\boldsymbol\Theta)=f(\mathbf A,\boldsymbol\Theta)+r_\mathbf A(\mathbf A)\)
and only \(\mathbf A\) changed in this step,~\eqref{eq:A-decrease} follows.
\end{proof}

\begin{lem}(\hspace{-3.5pt}~\cite[Lemma $2.2$]{xu2013block})
\label{lem:telescoping}
Under the parameter bounds on \(L_{\mathbf A}^k,L_{\boldsymbol \Theta}^k\) in Assumption~\ref{as:block-updates} we have monotone decrease:
\begin{align*}
F(\mathbf A^{k+1},\boldsymbol\Theta^{k+1}) \le F(\mathbf A^{k},\boldsymbol\Theta^{k}) &\\- \gamma\Big(\|\mathbf A^{k+1}-\mathbf A^{k}\|_F^2
+ \|\boldsymbol\Theta^{k+1}-\boldsymbol\Theta^{k}\|_F^2\Big)&,
\end{align*}
for \(\gamma = \tfrac12\min\{\underline L_{\mathbf A},\underline L_{\boldsymbol \Theta}\}>0\). Consequently
the series \(\sum_{k=0}^\infty \big(\|\mathbf A^{k+1}-\mathbf A^{k}\|_F^2
+ \|\boldsymbol\Theta^{k+1}-\boldsymbol\Theta^{k}\|_F^2\big)\) is finite provided
\(F\) is bounded below (Assumption~\ref{as:boundedness}).
\end{lem}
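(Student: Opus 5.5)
The plan is to add the two one-step decrease inequalities of Lemma~\ref{lem:one-step-decrease} and then telescope. Fix $k$. Adding \eqref{eq:Theta-decrease} and \eqref{eq:A-decrease} makes the intermediate value $F(\mathbf A^{k},\boldsymbol\Theta^{k+1})$ cancel, which gives
\begin{equation*}
F(\mathbf A^{k},\boldsymbol\Theta^{k}) - F(\mathbf A^{k+1},\boldsymbol\Theta^{k+1}) \ge \frac{L_{\boldsymbol\Theta}^k}{2}\|\boldsymbol\Theta^{k+1}-\boldsymbol\Theta^{k}\|_F^2 + \frac{L_{\mathbf A}^k}{2}\|\mathbf A^{k+1}-\mathbf A^{k}\|_F^2 .
\end{equation*}
By Assumption~\ref{as:block-updates}, $L_{\boldsymbol\Theta}^k\ge \underline L_{\boldsymbol\Theta}$ and $L_{\mathbf A}^k\ge \underline L_{\mathbf A}$ for every $k$. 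Both coefficients on the right are therefore bounded below by $\gamma=\tfrac12\min\{\underline L_{\mathbf A},\underline L_{\boldsymbol\Theta}\}$. This yields the stated monotone decrease, and in particular $F(\mathbf A^{k+1},\boldsymbol\Theta^{k+1})\le F(\mathbf A^{k},\boldsymbol\Theta^{k})$, since the subtracted term is nonnegative.

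For summability, sum the decrease inequality over $k=0,\dots,K-1$. The left-hand side telescopes, so
\begin{equation*}
\gamma\sum_{k=0}^{K-1}\Big(\|\mathbf A^{k+1}-\mathbf A^{k}\|_F^2+\|\boldsymbol\Theta^{k+1}-\boldsymbol\Theta^{k}\|_F^2\Big)\le F(\mathbf A^{0},\boldsymbol\Theta^{0})-F(\mathbf A^{K},\boldsymbol\Theta^{K}).
\end{equation*}
Assumption~\ref{as:boundedness} gives $F(\mathbf A^0,\boldsymbol\Theta^0)<+\infty$ and $F(\mathbf A^K,\boldsymbol\Theta^K)\ge \inf F>-\infty$. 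The right-hand side is therefore bounded by $F(\mathbf A^0,\boldsymbol\Theta^0)-\inf F$, a finite constant independent of $K$. The partial sums are nonnegative and nondecreasing in $K$, so letting $K\to\infty$ shows the series converges. As a by-product, both step norms tend to zero.

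The main obstacle is not the telescoping but checking that Lemma~\ref{lem:one-step-decrease} is valid for the iterates Algorithm~\ref{alg:BCD} actually produces. That lemma relies on $\boldsymbol\Theta^{k+1}$ and $\mathbf A^{k+1}$ being exact minimizers of the proximal subproblems \eqref{eq:block1} and \eqref{eq:block2}. In the algorithm, however, $\boldsymbol\Theta$ is updated by a single gradient step followed by FL averaging, and $\mathbf A$ by a gradient step, SVT and a simplex projection.

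If exact minimization cannot be assumed, I would argue via prox-linear updates instead. The $L$-smoothness of Assumption~\ref{as:smooth} gives the descent lemma $f(\cdot^{+})\le f(\cdot)+\langle\nabla f,\Delta\rangle+\tfrac{L_{\mathcal B}}{2}\|\Delta\|^2$. Combining it with the optimality of the prox step yields a decrease of $\tfrac{L^k-L_{\mathcal B}}{2}\|\Delta\|^2$, provided $L^k> L_{\mathcal B}$. The resulting $\gamma$ then uses $\underline L-L_{\mathcal B}$ in place of $\underline L$. For the $\mathbf A$-block the prox step must combine the nuclear norm with the simplex constraint, which matches the SVT-then-projection step only approximately. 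I would state this as an interpretation of the update, and for the FL averaging step I would note that averaging the local gradients is the full gradient step when cluster weights are uniform.
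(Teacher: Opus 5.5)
Your proof is correct and matches the paper's: you add the two one-step decreases of Lemma~\ref{lem:one-step-decrease} so that $F(\mathbf A^{k},\boldsymbol\Theta^{k+1})$ cancels, bound $L_{\mathbf A}^k/2$ and $L_{\boldsymbol\Theta}^k/2$ below by $\gamma$ via Assumption~\ref{as:block-updates}, and telescope against the lower bound of Assumption~\ref{as:boundedness}. Your caveat is a legitimate point the paper leaves unaddressed, but it concerns Lemma~\ref{lem:one-step-decrease} rather than this telescoping argument: that lemma presumes exact proximal minimizers, whereas Algorithm~\ref{alg:BCD} actually performs a gradient step with FL averaging for $\boldsymbol\Theta$, and an SVT step followed by a simplex projection for $\mathbf A$.
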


\begin{proof}
Sum~\eqref{eq:Theta-decrease} and~\eqref{eq:A-decrease} to get
\begin{align*}
    F(\mathbf A^{k},\boldsymbol\Theta^{k})-F(\mathbf A^{k+1},\boldsymbol\Theta^{k+1}) & \ge \\ \frac{L_{\mathbf A}^k}{2}\|\mathbf A^{k+1}-\mathbf A^{k}\|_F^2
+ \frac{L_{\boldsymbol \Theta}^k}{2}\|\boldsymbol\Theta^{k+1}-\boldsymbol\Theta^{k}\|_F^2&.
\end{align*}

Using the uniform lower bounds $\underline L_{(\cdot)}$ yields the claimed inequality with
\(\gamma\) as above. Summing the inequalities over $k$ results in a telescoping sum of the objective, and, because $F$ is bounded below (Assumption~\ref{as:boundedness}), the sum of the right-hand terms is finite. 
\end{proof}

\begin{corollary}\label{cor:vanish}
From Lemma~\ref{lem:telescoping} we obtain
\[
\|\mathbf A^{k+1}-\mathbf A^{k}\|_F \to 0,\qquad
\|\boldsymbol\Theta^{k+1}-\boldsymbol\Theta^{k}\|_F \to 0,\quad\text{as }k\to\infty.
\]
\end{corollary}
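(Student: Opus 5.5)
The corollary follows directly from the summability statement in Lemma~\ref{lem:telescoping}, together with the elementary fact that the terms of a convergent series of nonnegative numbers tend to zero. I will first sum the monotone-decrease inequality of Lemma~\ref{lem:telescoping} from $k=0$ to $K-1$. The left-hand side telescopes, which gives
\begin{equation*}
\gamma\sum_{k=0}^{K-1}\Big(\|\mathbf A^{k+1}-\mathbf A^{k}\|_F^2+\|\boldsymbol\Theta^{k+1}-\boldsymbol\Theta^{k}\|_F^2\Big)\le F(\mathbf A^{0},\boldsymbol\Theta^{0})-F(\mathbf A^{K},\boldsymbol\Theta^{K}).
\end{equation*}
By Assumption~\ref{as:boundedness}, $F(\mathbf A^{0},\boldsymbol\Theta^{0})<+\infty$ and $F(\mathbf A^{K},\boldsymbol\Theta^{K})\ge\inf F>-\infty$. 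Hence the right-hand side is bounded uniformly in $K$. Since $\gamma=\tfrac12\min\{\underline L_{\mathbf A},\underline L_{\boldsymbol\Theta}\}>0$ by Assumption~\ref{as:block-updates}, the partial sums of this nonnegative series are nondecreasing and bounded, so the series converges.

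Next, I will use the fact that the general term of a convergent series with nonnegative terms tends to zero. This gives
\[
\|\mathbf A^{k+1}-\mathbf A^{k}\|_F^2+\|\boldsymbol\Theta^{k+1}-\boldsymbol\Theta^{k}\|_F^2\to0 .
\]
Each of the two summands is nonnegative and bounded above by their sum, so each tends to zero on its own. Taking square roots, which is a continuous operation at $0$, yields the two claimed limits.

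I do not expect any real obstacle here. The only point needing care is that the constant $\gamma$ must not depend on $k$. That uniformity is exactly what the lower bounds $\underline L_{\mathbf A},\underline L_{\boldsymbol\Theta}$ in Assumption~\ref{as:block-updates} provide, and without them the summability of the step lengths would not follow from the telescoping bound.
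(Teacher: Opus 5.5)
Your argument is correct and is the same as the paper's, whose entire proof is that the square-summability in Lemma~\ref{lem:telescoping} forces the terms to vanish. You additionally re-derive that summability from the telescoping bound and spell out the splitting of the summed term into its two nonnegative pieces, both of which the paper leaves implicit.
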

\begin{proof}
Square-summability implies the terms tend to zero.
\end{proof}

\begin{lem}(\hspace{-3pt}~\cite[Theorem $2.3$]{xu2013block})
\label{lem:subsequence}
Suppose the iterates \(\{\mathbf A^{k},\boldsymbol\Theta^{k}\}\) are contained in a bounded set (Assumption~\ref{as:boundedness}). Then the sequence has at least one accumulation point
\(\bar x=(\bar{\mathbf A},\bar{\boldsymbol\Theta})\). Any such accumulation point is a
block-coordinate-wise stationary Nash point: i.e., it satisfies the first-order optimality condition of each block.
\end{lem}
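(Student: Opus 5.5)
Bolzano--Weierstrass gives the accumulation point. The stationarity claim rests on Lemma~\ref{lem:one-step-decrease}, Corollary~\ref{cor:vanish} and the continuity in Assumption~\ref{as:smooth}.

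\emph{Accumulation point.} Suppose the iterates $\{(\mathbf A^k,\boldsymbol\Theta^k)\}$ lie in a bounded set $\mathcal B$. For the $\mathbf A$-component this is automatic, since every row of $\mathbf A^k$ lies in the probability simplex after the projection step. For $\boldsymbol\Theta$ it is the hypothesis of the lemma; I would add that it is natural because the level set $\{F\le F(\mathbf A^0,\boldsymbol\Theta^0)\}$ contains all iterates by the monotone decrease in Lemma~\ref{lem:telescoping}. A bounded sequence in the finite-dimensional space $\mathbb R^{N\times C}\times\mathbb R^{M\times C}$ has a convergent subsequence $(\mathbf A^{k_j},\boldsymbol\Theta^{k_j})\to(\bar{\mathbf A},\bar{\boldsymbol\Theta})$.

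\emph{Stationarity.} Next I would pass to the limit in the first-order optimality conditions of the two block subproblems~\eqref{eq:block1} and~\eqref{eq:block2}. For the $\boldsymbol\Theta$-block the condition is
\[
0=\nabla_{\boldsymbol\Theta}f(\mathbf A^{k_j},\boldsymbol\Theta^{k_j+1})+L_{\boldsymbol\Theta}^{k_j}(\boldsymbol\Theta^{k_j+1}-\boldsymbol\Theta^{k_j}).
\]
By Corollary~\ref{cor:vanish}, $\boldsymbol\Theta^{k_j+1}\to\bar{\boldsymbol\Theta}$ as well, and the proximal term vanishes since $L^{k}_{\boldsymbol\Theta}\le\overline L_{\boldsymbol\Theta}$ (Assumption~\ref{as:block-updates}). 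Assumption~\ref{as:smooth} makes $\nabla f$ continuous on $\mathcal B$, so in the limit $\nabla_{\boldsymbol\Theta}f(\bar{\mathbf A},\bar{\boldsymbol\Theta})=0$.

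For the $\mathbf A$-block, the optimality condition is an inclusion
\[
0\in\nabla_{\mathbf A}f(\mathbf A^{k_j+1},\boldsymbol\Theta^{k_j+1})+L_{\mathbf A}^{k_j}(\mathbf A^{k_j+1}-\mathbf A^{k_j})+\beta\,\partial\|\mathbf A^{k_j+1}\|_*+N_{\Delta}(\mathbf A^{k_j+1}),
\]
where $N_\Delta$ is the normal cone of the row-simplex constraint set. The nuclear norm is convex and finite and $\Delta$ is closed and convex, so the graph of $\partial(\beta\|\cdot\|_*+\iota_\Delta)$ is closed. Taking limits, using again Corollary~\ref{cor:vanish}, continuity of $\nabla f$, and boundedness of $L_{\mathbf A}^k$, gives $0\in\nabla_{\mathbf A}f(\bar{\mathbf A},\bar{\boldsymbol\Theta})+\beta\partial\|\bar{\mathbf A}\|_*+N_\Delta(\bar{\mathbf A})$. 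Together with the $\boldsymbol\Theta$-condition, this is the block-wise Nash stationarity asserted in the lemma. Because $F=f+r_{\mathbf A}+\iota_\Delta$ has a separable nonsmooth part, this is also equivalent to $0\in\partial F(\bar{\mathbf A},\bar{\boldsymbol\Theta})$.

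\emph{Main obstacle.} The hardest step is the closedness of the subdifferential graph for the $\mathbf A$-block. I would handle it through outer semicontinuity of the convex subdifferential, taking a bounded selection $\mathbf G^{k_j}\in\partial(\beta\|\cdot\|_*+\iota_\Delta)(\mathbf A^{k_j+1})$ determined by the inclusion above and extracting a further convergent subsequence.

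A second, subtler issue is that Algorithm~\ref{alg:BCD} does not compute the exact prox of~\eqref{eq:block2}: it applies SVT and then the simplex projection sequentially, and the $\boldsymbol\Theta$-step is FL-averaged. I would therefore state the argument for the exact block minimizers as in~\cite{xu2013block}, and note that the implemented steps are treated as these idealized updates. This is the same identification already used in Lemma~\ref{lem:one-step-decrease}.
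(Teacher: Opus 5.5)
Your proof is correct and has the same skeleton as the paper's. Both use Bolzano--Weierstrass for the accumulation point. Both then pass to the limit along $k_j$ in the optimality conditions of the proximal block subproblems, using Corollary~\ref{cor:vanish}, the upper bounds on $L^k$, and continuity of $\nabla f$. Your $\boldsymbol\Theta$-block argument is identical to the paper's.

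The genuine difference is the $\mathbf A$-block. The paper writes its optimality condition as the equation $0=\nabla_{\mathbf A}f(\mathbf A^{k+1},\boldsymbol\Theta^{k+1})+L_{\mathbf A}^k(\mathbf A^{k+1}-\mathbf A^k)$. This drops both $\beta\,\partial\|\cdot\|_*$ and the simplex normal cone, and the paper then concludes $\nabla_{\mathbf A}f(\bar{\mathbf A},\bar{\boldsymbol\Theta})=0$. That equation is not the optimality condition of~\eqref{eq:block2}, and its conclusion cannot hold here except in a degenerate case. With the Laplacian $\mathbf D$ of Section~\ref{sec:setup}, $\mathbf D$ annihilates the all-ones vector, and on the simplex $\sum_c\boldsymbol a_c$ is the all-ones vector. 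So the $i$th row of $\nabla_{\mathbf A}f$ sums to $\sum_c\ell(\boldsymbol{\omega}(\mathbf x_i;\boldsymbol\theta_c),\mathbf y_i(t+h))$. For the MSE this vanishes only if every cluster model fits cell $i$ exactly.

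Your inclusion with $\beta\,\partial\|\mathbf A\|_*+N_\Delta(\mathbf A)$ fixes this. Combined with closedness of the graph of the subdifferential of the proper, lower semicontinuous, convex function $\beta\|\cdot\|_*+\iota_\Delta$, it gives the correct limiting condition. Your observation that this is equivalent to $0\in\partial F(\bar{\mathbf A},\bar{\boldsymbol\Theta})$ supplies the right notion of criticality. The paper's version is shorter but rests on a false intermediate equation; yours is the rigorous form of the same argument.

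Two minor points:
\begin{itemize}
\item No further subsequence is needed in the closed-graph step. The selection $-\nabla_{\mathbf A}f(\mathbf A^{k_j+1},\boldsymbol\Theta^{k_j+1})-L_{\mathbf A}^{k_j}(\mathbf A^{k_j+1}-\mathbf A^{k_j})$ already converges.
\item You are right to treat boundedness of the $\boldsymbol\Theta$-iterates as a hypothesis. Assumption~\ref{as:boundedness} only bounds $F$ from below, and the level-set remark does not by itself give boundedness without coercivity.
\end{itemize}
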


\begin{proof}
Boundedness gives existence of a convergent subsequence \(x^{k_j}\to\bar x\).
By optimality of each proximal subproblem we have the proximal optimality inclusions
\begin{align*}
0 &= \nabla_{\boldsymbol\Theta} f(\mathbf A^{k},\boldsymbol\Theta^{k+1})
+ L_{\boldsymbol \Theta}^k(\boldsymbol\Theta^{k+1}-\boldsymbol\Theta^{k}), \\
0 &= \nabla_{\mathbf A} f(\mathbf A^{k+1},\boldsymbol\Theta^{k+1})
+ L_{\mathbf A}^k(\mathbf A^{k+1}-\mathbf A^{k}).
\end{align*}
Take limits along the subsequence \(k_j\). Using continuity of \(\nabla f\) and the fact
that the step differences vanish (Corollary~\ref{cor:vanish}), the inclusion limits to
\begin{align*}
0 &= \nabla_{\boldsymbol\Theta} f(\bar{\mathbf A},\bar{\boldsymbol\Theta}),\\
0 &= \nabla_{\mathbf A} f(\bar{\mathbf A},\bar{\boldsymbol\Theta}).
\end{align*}
These are exactly the block first-order optimality conditions (Nash conditions). 
\end{proof}

\begin{theo}
\label{thm:global}
Under the previous Assumptions~\ref{as:smooth}--\ref{as:boundedness}, the sequence \(\{\mathbf A^{k},\boldsymbol\Theta^{k}\}\) converges to a single limit point \(\bar x=(\bar{\mathbf A},\bar{\boldsymbol\Theta})\), which is a block-stationary
point. The convergence rate depends on the KL exponent at \(\bar x\).
\end{theo}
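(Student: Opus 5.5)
Theorem~\ref{thm:global} will follow from the Kurdyka--\L{}ojasiewicz (KL) framework of~\cite[Theorem 2.8]{xu2013block}, as used in Attouch--Bolte type arguments. We already have sufficient decrease (Lemma~\ref{lem:telescoping}), vanishing step lengths (Corollary~\ref{cor:vanish}), and subsequential convergence to block-stationary points (Lemma~\ref{lem:subsequence}). What remains is (i) a relative-error bound linking a subgradient of $F$ at the new iterate to the step length, (ii) verification that $F$ satisfies the KL property, and (iii) the standard finite-length argument that upgrades subsequential to full-sequence convergence.

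\emph{Step 1 (relative error).} From the optimality conditions of~\eqref{eq:block1} and~\eqref{eq:block2}, including the nuclear-norm subgradient and the normal cone of the simplex constraint in the $\mathbf A$-block, I will construct an element $w^{k+1}\in\partial F(\mathbf A^{k+1},\boldsymbol\Theta^{k+1})$. Its $\boldsymbol\Theta$-component is $\nabla_{\boldsymbol\Theta}f(\mathbf A^{k+1},\boldsymbol\Theta^{k+1})-\nabla_{\boldsymbol\Theta}f(\mathbf A^{k},\boldsymbol\Theta^{k+1})-L^k_{\boldsymbol\Theta}(\boldsymbol\Theta^{k+1}-\boldsymbol\Theta^k)$, and the $\mathbf A$-component is analogous. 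By Assumption~\ref{as:smooth} on the bounded set containing the iterates, together with Assumption~\ref{as:block-updates}, this gives $\|w^{k+1}\|\le (L_{\mathcal B}+\overline L)\,\|x^{k+1}-x^k\|$.

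\emph{Step 2 (KL property).} This is the main obstacle. $F$ is the sum of $f$, the nuclear norm (semialgebraic), and the simplex indicator (semialgebraic). The term $g$ is quadratic. The loss is MSE composed with an MLP. If the activations are real-analytic (e.g.\ linear, tanh, or softplus; the smooth case of Assumption~\ref{as:smooth}) or semialgebraic, then $f$ is definable in an o-minimal structure (globally subanalytic on bounded sets). Hence $F$ is a KL function, with a desingularizing function $\varphi(s)=cs^{1-\theta}$ for some $\theta\in[0,1)$. I will state this as a mild additional requirement on $\boldsymbol\omega$, since the listed assumptions do not by themselves imply KL. Boundedness of the iterates is also required. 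For $\mathbf A$ it follows from the simplex constraint; for $\boldsymbol\Theta$ I will take it from the level-boundedness implicit in Lemma~\ref{lem:subsequence}.

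\emph{Step 3 (finite length and rate).} Let $\Omega$ be the set of accumulation points. It is nonempty and compact, and $F$ is constant on it, because $F(x^k)$ is monotone and bounded below, and $F$ is continuous on the feasible set. Apply the uniformized KL inequality on a neighbourhood of $\Omega$: $\varphi'(F(x^k)-F^*)\,\mathrm{dist}(0,\partial F(x^k))\ge1$. Combining this with Lemma~\ref{lem:telescoping}, Step 1, and concavity of $\varphi$ gives $2\|x^{k+1}-x^k\|\le\|x^k-x^{k-1}\|+C[\varphi(F(x^k)-F^*)-\varphi(F(x^{k+1})-F^*)]$. Summing shows $\sum_k\|x^{k+1}-x^k\|<\infty$, so $\{x^k\}$ is Cauchy and converges to a single $\bar x$, which by Lemma~\ref{lem:subsequence} is block-stationary. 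Finally, the standard case analysis on $\theta$ determines the rate: $\theta=0$ gives finite termination, $\theta\in(0,\tfrac12]$ gives linear convergence, and $\theta\in(\tfrac12,1)$ gives the sublinear rate $O(k^{-(1-\theta)/(2\theta-1)})$.
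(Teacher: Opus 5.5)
Your proposal is correct and follows the same route as the paper: the sufficient decrease from Lemma~\ref{lem:telescoping} combined with the KL inequality gives $\sum_k\|x^{k+1}-x^k\|<\infty$. Hence the sequence is Cauchy, converges to a single block-stationary limit, and its rate is governed by the KL exponent. You go beyond the paper's sketch by deriving the relative-error bound explicitly and by stating openly that the KL property (for example, via definability of the activations) and boundedness of the $\boldsymbol\Theta$-iterates are additional hypotheses, not consequences of Assumptions~\ref{as:smooth}--\ref{as:boundedness}; the paper's proof relies on both without saying so.
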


\begin{proof}
From Lemma~\ref{lem:telescoping} we have sufficient decrease and square-summability of step norms. The KL property provides a de-singularizing function \(\varphi\) so that near a limit point, the objective residual controls the norm of a suitable gradient. Combining the sufficient-decrease and square-summability results with the KL inequality 
yields that the sum of the norms of increments is finite:
\[
\sum_{k=0}^{\infty}\|x^{k+1}-x^k\|_F < +\infty,
\]
which implies Cauchy property and hence full-sequence convergence to a single limit point. The rates follow from the exponent in the KL inequality~\cite{xu2013block}. 
\end{proof}



\renewcommand{\bibname}{References}
\printbibliography

\end{document}